\documentclass[11pt]{article}
\usepackage[letterpaper,margin=1in]{geometry}
\usepackage{amsmath,amssymb,amsthm,mathtools}
\usepackage{aliascnt}
\usepackage{booktabs,array,enumitem,microtype,needspace}
\usepackage[dvipsnames]{xcolor}
\usepackage[hidelinks]{hyperref}
\usepackage[nameinlink,noabbrev,capitalise]{cleveref}
\usepackage{algorithm}
\usepackage[noend]{algpseudocode}
\algrenewcommand{\textproc}[1]{\text{\emph{#1}}}

\newtheorem{theorem}{Theorem}[section]
\newaliascnt{lemma}{theorem}
\newtheorem{lemma}[lemma]{Lemma}
\aliascntresetthe{lemma}
\newaliascnt{corollary}{theorem}
\newtheorem{corollary}[corollary]{Corollary}
\aliascntresetthe{corollary}
\newaliascnt{proposition}{theorem}

\aliascntresetthe{proposition}
\newaliascnt{claim}{theorem}

\aliascntresetthe{claim}
\theoremstyle{definition}
\newaliascnt{definition}{theorem}
\newtheorem{definition}[definition]{Definition}
\aliascntresetthe{definition}
\newaliascnt{remark}{theorem}

\aliascntresetthe{remark}
\crefname{theorem}{Theorem}{Theorems}
\Crefname{theorem}{Theorem}{Theorems}
\crefname{lemma}{Lemma}{Lemmas}
\Crefname{lemma}{Lemma}{Lemmas}
\crefname{corollary}{Corollary}{Corollaries}
\Crefname{corollary}{Corollary}{Corollaries}
\crefname{proposition}{Proposition}{Propositions}
\Crefname{proposition}{Proposition}{Propositions}
\crefname{claim}{Claim}{Claims}
\Crefname{claim}{Claim}{Claims}
\crefname{definition}{Definition}{Definitions}
\Crefname{definition}{Definition}{Definitions}
\crefname{remark}{Remark}{Remarks}
\Crefname{remark}{Remark}{Remarks}

\newcommand{\ED}{\operatorname{WED}}
\newcommand{\SD}{\ED^{>H}}
\newcommand{\gap}{\bot}
\newcommand{\eps}{\varepsilon}
\newcommand{\softO}{\widetilde O}
\newcommand{\R}{\mathbb R}

\newcommand{\Prb}{\mathop{\rm Pr}}
\newcommand{\E}{\mathbb E}
\newcommand{\cX}{\mathcal X}
\newcommand{\cY}{\mathcal Y}
\newcommand{\cM}{\mathcal M}

\newif\ifshowcomments
\showcommentstrue

\title{A Strongly Subquadratic $(3+\varepsilon)$-Approximation
for\\ Weighted Edit Distance over Arbitrary Metrics}
\author{%
  Ethan Mader\thanks{Department of Computer Science, Purdue University, West Lafayette, Indiana.
    \href{mailto:emader@purdue.edu,btavasol@purdue.edu}{\nolinkurl{{emader, btavasol}@purdue.edu}}.}
  \and
  Borna Tavasoli\footnotemark[1]
  \and
  Jihan Wang\thanks{Department of Computer Science, The University of Illinois Urbana-Champaign, Urbana, Illinois.\newline%
    \hspace*{1.8em}\href{mailto:jihanw2@illinois.edu}{\mbox{\nolinkurl{jihanw2@illinois.edu}}}.}
}
\date{September 13, 2026}

\begin{document}
\hypersetup{pageanchor=false}
\maketitle

\begin{abstract}

We study weighted edit distance between two strings of total length \(n\),
where edit costs are induced by an arbitrary metric.  For equal-length inputs,
Kuszmaul~\cite{Kuszmaul} gave an \(O(n^\delta)\)-approximation with
\(\softO(n^{2-\delta})\) running time for every fixed \(0<\delta<1\).  We
give the first constant-factor approximation for weighted edit distance over
arbitrary metrics in strongly subquadratic running time.  For every
\(0<\varepsilon\le1\), our randomized algorithm runs in
\(\softO(n^{7/4}/\varepsilon^8)\) time and returns a
\((3+\varepsilon)\)-approximation with probability at least \(1-n^{-10}\).  The algorithm
allows unequal input lengths and places no bound on the ratio between
edit costs.  

\end{abstract}

\thispagestyle{empty} 
\clearpage 

\setcounter{page}{1}
\hypersetup{pageanchor=true}
\section{Introduction}

The classical \emph{unit-cost edit distance} between two strings is the
minimum number of insertions, deletions, and substitutions needed to
transform one into the other.  Introduced by Levenshtein in the study of
error-correcting codes~\cite{Levenshtein}, it is a fundamental measure of
dissimilarity between strings. It is widely used in sequence alignment, text
comparison, document processing, and approximate pattern
matching.  In many of these applications, however, not all edits are equally costly, and the natural measure charges each operation according to the symbols it involves.

Formally, let \(X,Y\in\Sigma^*\) be two strings over an alphabet \(\Sigma\), and set
\(n=|X|+|Y|\).  The input includes a metric \(\rho\) on
\(\Sigma\cup\{\gap\}\), where \(\gap\notin\Sigma\) is a gap symbol.
Deleting or inserting a symbol \(a\in\Sigma\) costs \(\rho(a,\gap)\).
Substituting \(a\) by \(b\in\Sigma\) costs \(\rho(a,b)\).  The \emph{weighted
edit distance} \(\ED_\rho(X,Y)\) is the minimum total cost of a sequence of
these operations that transforms \(X\) into \(Y\).  We study this problem
for arbitrary metrics.  Costs need not be integral or normalized, and the
ratio between the largest and smallest positive costs may be arbitrarily large.
We abbreviate \(\ED_\rho\) to \(\ED\) when the metric is fixed.

\paragraph{Exact computation.}
The classical dynamic programming (DP) algorithm of Wagner and Fischer computes
edit distance in
\(O(n^2)\) time~\cite{WagnerFischer}.  Masek and Paterson obtained an
\(O(n^2/\log n)\)-time algorithm for a fixed finite alphabet and fixed
rational edit costs~\cite{MasekPaterson}.  Ukkonen~\cite{Ukkonen}
gave an \(O(nk)\)-time exact algorithm when the unit-cost
edit distance is \(k\).  Landau and Vishkin
subsequently obtained an \(O(n+k^2)\)-time exact
algorithm~\cite{LandauVishkin}.  These improvements do not yield a strongly
subquadratic time in the worst case.  Assuming the Strong
Exponential Time Hypothesis (SETH)~\cite{ImpagliazzoPaturi}, Backurs and Indyk
proved that even
unit-cost edit distance admits no
\(O(n^{2-\delta})\)-time exact algorithm for any constant
\(\delta>0\)~\cite{BackursIndyk}.  This barrier motivates the search for
faster approximation algorithms.

\begingroup
\setlength{\emergencystretch}{1em}
\paragraph{Unit-cost approximation.}
Bar-Yossef, Jayram, Krauthgamer, and Kumar gave an
\(n^{3/7}\)-approximation in \(\widetilde O(n)\)
time~\cite{BarYossefEtAl}.
Batu, Erg\"un, and Sahinalp improved this to an
\(n^{1/3+o(1)}\)-approximation in \(\widetilde O(n)\)
time~\cite{BatuErgunSahinalp}.  Andoni and Onak then obtained a
\(2^{O(\sqrt{\log n\log\log n})}\)-approximation in
\(O(n^{1+o(1)})\)
time~\cite{AndoniOnak}, and Andoni, Krauthgamer, and Onak subsequently
achieved a
\((\log n)^{O(1/\delta)}\)-approximation in
\(O(n^{1+\delta})\) time for every fixed
\(\delta>0\)~\cite{AndoniKrauthgamerOnak}.

Chakraborty, Das, Goldenberg, Kouck\'y, and Saks gave the first classical constant-factor approximation in
\(\widetilde O(n^{12/7})\) time~\cite{CDGKS}.  Goldenberg, Rubinstein, and
Saha then obtained a randomized
\((3+o(1))\)-approximation in \(n^{8/5+o(1)}\)
time~\cite{GoldenbergRubinsteinSaha}.
For inputs whose edit distance is nearly linear in \(n\), Kouck\'y and
Saks~\cite{KouckySaks} and Brakensiek and Rubinstein~\cite{BrakensiekRubinstein}
gave constant-factor approximations in near-linear time.  Andoni and Nosatzki
then gave a randomized constant-factor approximation for all inputs in
\(O(n^{1+\delta})\) time for every fixed
\(\delta>0\), with an approximation factor depending on
\(\delta\)~\cite{AndoniNosatzki}.  More recently, Mao and Rubinstein
obtained a randomized
\((1+\varepsilon)\)-approximation in quasi-strongly subquadratic time
for every constant \(\varepsilon>0\)~\cite{MaoRubinstein}.  
Fox, Mader, Quanrud, Tavasoli, and Wang achieved a randomized
\((3+\varepsilon)\)-approximation in \(\widetilde O(n^{6/5}k^{2/5}\operatorname{poly}(1/\varepsilon))\) time~\cite{FMQTW26}, where \(k\) is the edit distance.
\par
\endgroup

\paragraph{Bounded weighted edit distance.}
Recent work has made substantial progress on bounded weighted edit distance.
The weights are normalized so that matching identical symbols costs zero and
every other edit operation costs at least one.  The running times depend on
the weighted distance \(k\) under this normalization.  Das, Gilbert, Hajiaghayi,
Kociumaka, and Saha gave the first
exact \(O(n+\operatorname{poly}(k))\)-time algorithm for weighted string edit
distance~\cite{DasEtAl}.  Cassis, Kociumaka,
and Wellnitz improved this running time~\cite{CassisKociumakaWellnitz}.
For \(\sqrt n\le k\le n\), they also showed that their running time is optimal
up to subpolynomial factors under the All-Pairs Shortest Paths
(APSP) hypothesis ~\cite{VassilevskaWilliamsWilliams}.  For integer weights
bounded by \(W\), Gorbachev and Kociumaka subsequently obtained an
\(\widetilde O(n+Wk^2)\)-time algorithm~\cite{GorbachevKociumaka}.
These results exploit small normalized distance.

\paragraph{Arbitrary metrics.}
The same quadratic-time DP computes both unit-cost edit distance and weighted
edit distance over arbitrary metrics exactly.  The approximation guarantees
for arbitrary metric costs have nevertheless remained much weaker than those
for unit-cost edit distance.  Prior to this work, the best approximation
tradeoff for arbitrary metric costs was due to Kuszmaul~\cite{Kuszmaul}.
For equal-length inputs, his randomized algorithm gives an \(O(n^\delta)\)-approximation in
\(\widetilde O(n^{2-\delta})\) time for every fixed \(0<\delta<1\).

\paragraph{Our contribution.}
We present the first constant-factor approximation for weighted edit distance over arbitrary metrics that runs in strongly subquadratic time.

\begin{theorem}\label{thm:main}
For every \(0<\varepsilon\le1\), there is a randomized algorithm that runs
in \(\softO(n^{7/4}/\varepsilon^8)\) time
and computes a \((3+\varepsilon)\)-approximation for \(\ED_\rho(X,Y)\)
with probability at least \(1-n^{-10}\).

\end{theorem}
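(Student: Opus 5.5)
We would follow the window-based framework used for unit-cost \((3+\varepsilon)\)-approximations (in the style of Goldenberg--Rubinstein--Saha and the earlier work of Fox et al.~\cite{FMQTW26}), adapted so that no step relies on integrality or bounded cost ratios. The first step is a scale reduction. We guess the answer \(\Delta\) up to a factor \(1+\varepsilon\) by trying powers of \(1+\varepsilon\) between the cheapest and most expensive single operation; only \(O(\log n/\varepsilon)\) scales are relevant after we discard scales where the answer is trivially determined. For a fixed guess \(\Delta\), we round every cost below \(\varepsilon\Delta/n\) down to zero and cap every cost above \(\Delta\) at a large constant times \(\Delta\). This changes the value of an optimal alignment by at most \(\varepsilon\Delta\). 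After the rounding, symbols at distance zero are merged: the metric then becomes a pseudometric, and we quotient by it. The result is that, at scale \(\Delta\), the cost ratio is \(\mathrm{poly}(n/\varepsilon)\). This justifies the notation \(\SD\) and the threshold \(H\) defined in the paper.

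The core is a gap tester. Given \(\Delta\), it either certifies \(\ED(X,Y)\le(3+\varepsilon)\Delta\) or correctly reports \(\ED(X,Y)>\Delta\). We partition \(X\) into windows of length \(w\approx n^{1/4}\) (tuned later). For each window \(X_i\), we find candidate windows \(Y_j\) in \(Y\), at starting positions on a grid, together with estimates \(d_{ij}\) of \(\ED(X_i,Y_j)\). A final DP over windows, which is a shortest path in an \(O(n/w)\times O(n/w\cdot\text{grid})\) graph, stitches these estimates into a global alignment. The factor \(3\) arises in the standard way. Suppose \(X_i\) is close to \(Y_j\), \(X_{i'}\) is close to \(Y_j\), and \(X_{i'}\) is close to \(Y_{j'}\). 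Then by the triangle inequality for \(\ED\), which holds because \(\rho\) is a metric, \(\ED(X_i,Y_{j'})\) is at most the sum of the three distances. This lets us infer distances for many pairs from few computed ones.

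The tester splits windows into two types. \emph{Dense} windows have many near neighbors within distance \(\approx\varepsilon\Delta w/n\) among the windows of \(X\) and \(Y\). We sample representatives and compute exact \(\ED\) from each representative to all windows, which costs \(O(w^2)\) per pair. Clustering through the triangle inequality then covers all dense pairs with distortion \(3\). \emph{Sparse} windows have few near neighbors. For these, an optimal alignment must map them near the diagonal band that is forced by the budget \(\Delta\). There, the weighted distance is charged against \(\Delta\), so only a band of width \(O(\Delta/\min\text{cost})\) needs checking. Because the costs have been rounded, this band can be bounded, and exhaustive DP on the band is cheap. The running time is balanced across four costs: sampling at rate \(\approx 1/(\text{density})\), representatives times all windows times \(w^2\), the band DP, and the final stitching DP. We expect this balancing to give \(n^{7/4}\), with the \(\varepsilon^{-8}\) factor coming from the grid granularity, the number of distance thresholds, and the scales. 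High probability is obtained by \(O(\log n)\) repetitions and a union bound.

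The main obstacle is the sparse case. For unit costs, ``few near neighbors'' is converted into a diagonal-band argument using integrality: any shift costs at least \(1\) per character. Under arbitrary metrics, shifts may be nearly free, for example when many symbols have tiny gap cost. Our plan is to apply the rounding step to gap costs separately. We classify symbols as \emph{cheap}, with gap cost below \(\varepsilon\Delta/n\), or \emph{expensive}. Cheap symbols are deleted outright from both strings, at total cost at most \(\varepsilon\Delta\). We then need to argue that the remaining strings have every shift priced at least \(\varepsilon\Delta/n\). That pricing yields the band width \(O(n/\varepsilon)\) in the expensive-symbol count, and hence the needed bound. Verifying that the deletion preserves a \((1\pm\varepsilon)\) approximation is the step we would check most carefully.
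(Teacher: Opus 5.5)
\textbf{The sparse case does not yield a subquadratic algorithm.} This is the decisive gap.

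After you delete every character of gap cost below $\tau=\varepsilon\Delta/n$, each gap edge costs at least $\tau$. An alignment of cost $\Delta$ may therefore still contain up to $\Delta/\tau=n/\varepsilon$ gap edges. A band of that width is the entire $|X|\times|Y|$ table, so the band DP is quadratic. Your argument also never uses sparsity: the band is forced by the budget alone and would apply equally to dense windows. Banding helps only when the budget permits $o(n)$ shifts, which cannot be assumed here.

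What is missing is a way to turn ``fewer than $k$ accepted candidates per window'' into a small search space. The paper does this as follows:
\begin{itemize}
\item It recursively bisects the residual $X$-windows into blocks $J$ and samples a few windows per block.
\item It searches only neighborhoods of radius $\Lambda_J=M_t(x_J)+B_{\rm loc}|J|t$ in clipped prefix mass around the sampled windows' accepted starts (\cref{lem:sparse-localize}). Sparsity bounds the number of neighborhoods, and grid density bounds their size.
\item Windows for which this localization fails are charged either to the block's optimal-subpath cost or to the costs of its bad windows. This totals $O(D/(L+1))$ per threshold (\cref{lem:first-failure-charge}).
\end{itemize}

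\textbf{The approximation and preprocessing steps also have gaps.}

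First, a single threshold near the average window cost cannot give $3+\varepsilon$ when the local costs $D_i$ vary. A recovered box costs about $D_i+2t$, so each window needs its own threshold $t_{j_i}$ with $\sum_i t_{j_i}\le(1+O(\varepsilon))D$. This requires a geometric ladder of thresholds and a chain DP that combines boxes from all levels. It also requires a proof that windows missed at their first good level cost only $O(\varepsilon D)$ extra when recovered later. The charging argument above is exactly what supplies this.

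Second, rounding costs below $\tau$ to zero does not give a pseudometric. If $\rho(a,b)=\rho(b,c)=0.6\tau$ and $\rho(a,c)=1.2\tau$, the rounded costs violate the triangle inequality, so your quotient is not well defined.

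Third, you give no way to identify the $O(\log n/\varepsilon)$ relevant scales without an estimate of $D$ to within a $\mathrm{poly}(n)$ factor. Enumerating all powers of $1+\varepsilon$ between the cheapest and most expensive operation makes the number of scales depend on the cost ratio. Even locating the cheapest substitution takes quadratically many oracle queries in the worst case. The paper instead builds such an estimate from random gap-cost cutoffs, banded DP, and bisection over the sorted input gap costs (\cref{thm:rough-guide}).

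Fourth, with the cutoff fixed globally at $\tau$, a $Y$-window within distance $t$ of a length-$w$ window can retain up to $w+tn/(\varepsilon\Delta)$ characters. That is of order $n$ once $t$ approaches $\Delta$, so your $O(w^2)$ comparison cost fails at large thresholds. The paper simplifies at the local level $H=\varepsilon t/d$ with a count test (\cref{lem:simplified-short}). It also snaps endpoints to a clipped-mass grid that isolates heavy characters, so rounding never crosses an expensive character.
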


The algorithm allows unequal input lengths.  The running time is independent
of the alphabet size and the ratio between the largest and smallest positive
metric costs.
\section{Overview}\label{sec:overview}

To efficiently approximate the weighted edit distance, we seek a small collection of local alignments that can be combined into
a global alignment without incurring too much cost.  Following the window
approach of earlier unit-cost approximation algorithms, we
partition \(X\) into short consecutive windows.  A fixed optimal
alignment pairs each of them with a (possibly empty) window of \(Y\).
These \(Y\)-windows occur in order, and the local alignment costs sum
to the optimum \(D\).  We want to find suitable replacements for
these unknown \(Y\)-windows without having to examine them all.

A grid reduces the candidate search by restricting the possible
endpoints.  It must allow the optimal \(Y\)-windows to be rounded
without much additional cost.  Under arbitrary metrics, even moving an
endpoint by a few positions can cross an expensive character.  A bound
on the number of edits therefore gives no bound on this rounding cost.
A long \(Y\)-window can also be close in weighted edit distance to a
short \(X\)-window because it contains many inexpensive characters.
We must control the cost of rounding while keeping these long windows
inexpensive to compare.

\paragraph{Searching by cost.}
The local optimal costs may differ greatly, so we search at increasing
thresholds.  At a threshold \(t\), we seek local alignments of cost at
most \(t\) and keep the rounding and comparison errors small relative
to \(t\).  A box records an \(X\)-window,
a \(Y\)-window, and a label that bounds their weighted edit
distance.  The chunked DP chooses boxes whose windows occur in order in
both strings and pays for skipped characters through insertions and
deletions.  It can combine boxes from different thresholds.  The useful
boxes will have an additional cost proportional to their thresholds, so
we need to find them at thresholds whose sum is close to \(D\).

A guess of the total distance allows us to restrict where candidate windows can
start.  We obtain logarithmically many guesses from a rough estimate
computed using random cost cutoffs and a banded DP.  The estimate is found
by searching the insertion and deletion costs appearing in the input,
which avoids traversing their potentially large numerical range.
For the explanation below, fix a guess between \(D\) and \(2D\).

A character's gap cost is its insertion or deletion cost.  Even a short
window can have arbitrarily large total gap cost while aligning cheaply.
We cap each gap cost at \(t\) and call the resulting sum its clipped
mass.  This bounds an \(X\)-window's clipped mass by its
length times \(t\).  The triangle inequality bounds the difference in
clipped mass between two strings by their weighted edit distance.
For prefixes along the optimal alignment, the global guess bounds this
mass difference and hence restricts candidate starts.  The local
threshold \(t\) similarly restricts ends by bounding the difference
between the \(X\)- and \(Y\)-windows' clipped masses.

We then restrict endpoints to a grid that places cuts on both sides of
expensive characters and groups runs of inexpensive characters into
pieces of small total gap cost.  Rounding inward removes only characters
whose original gap costs sum to \(O(\eps t)\).  The grid also has few
cuts in any interval of clipped prefix mass of width comparable to \(t\).
These properties give small candidate families containing the rounded
optimal \(Y\)-windows once \(t\) covers their local costs.
Grouping \(X\)-windows with similar clipped mass provides a common
candidate family for sampling and sharing comparisons.  Clipped mass is
the key to our weighted framework because it turns bounds on weighted edit
distance into small candidate families whose size is independent of the
range of metric costs.  The edit sequences still pay the original costs.

\paragraph{Comparing long windows.}
A candidate family may have very long windows. To compare them efficiently, 
we adapt Kuszmaul's threshold simplifications~\cite{Kuszmaul} to retain only sufficiently expensive
characters.  We run the weighted DP on the retained subsequences and add
the full gap costs of the removed characters.  The resulting simplified
distance is the cost of an edit sequence and therefore bounds the true
distance from above.

The only loss is the saving from pairing a removed character with a
character of the other string instead of paying both gap costs.
The triangle inequality bounds a pair's saving by twice the smaller gap
cost.
A short \(X\)-window can participate in few pairs, so the total loss is
only \(O(\eps t)\).  A \(Y\)-window within weighted edit distance \(t\) of
the short \(X\)-window also has few retained characters.  Those not matched
to the \(X\)-window must be inserted at their full gap costs.
A count test rejects candidates with too many retained characters before
running the DP.  Every remaining comparison uses a small DP table
regardless of the \(Y\)-window's original length.

\paragraph{Finding useful boxes.}
Even with small candidate families and fast comparisons, examining every
candidate for every \(X\)-window is still too costly.  We develop
weighted dense and sparse phases inspired by unit-cost
approximation algorithms~\cite{CDGKS,AndoniNosatzki,GoldenbergRubinsteinSaha,FMQTW26}.
These phases are designed for the clipped candidate families and the
simplified distance.  
At threshold \(t\), a candidate is accepted if its simplified distance
from the \(X\)-window is at most \(t\).  Sampling from the common
candidate family finds an accepted center for each \(X\)-window with
many accepted candidates.  Windows assigned the same center then share
its comparisons with their candidates.  The union of their candidate
families is also small because their clipped prefix bands overlap at
the center.  The center and a candidate may both be long, but composing
their alignments through the short \(X\)-window gives a matching with
few pairs.  The same bound on lost savings therefore applies to this
comparison.  Once \(t\) covers the local optimal
cost and the small errors, the box for the rounded optimal \(Y\)-window has
label at most the local optimum plus \(2t+O(\eps t)\).  The extra
\(2t\) comes from the alignment of the \(X\)-window with its center.  Its cost is at most
\(t\) and appears once in the label and once in the bound for the
remaining comparison.

The windows left without a center have few accepted candidates, which
a small sample of candidates may miss.  The sparse phase instead samples
\(X\)-windows within consecutive blocks.  The clipped mass of a block
and the cost of its optimal subpath bound how far apart its desired
\(Y\)-window starts can be in clipped prefix mass.  When the subpath is
inexpensive, a neighborhood of one desired start contains the others.
The algorithm cannot identify the desired start among all accepted
starts, so it searches a neighborhood around each of them.  Sparsity
limits the number of neighborhoods, and grid density bounds the number
of candidate starts in each.  We recurse on smaller blocks
until individual \(X\)-windows are reached.
The sparse boxes use direct comparisons and add only \(O(\eps t)\)
simplification cost.

\paragraph{Combining thresholds.}
Consecutive thresholds differ by only a small relative amount.
The first thresholds that cover the local optimal costs and the small
errors therefore sum to nearly \(D\).  However, a desired box may
be missed at that point.
The algorithm keeps boxes from every threshold, and the analysis
chooses a later recovery when necessary.  We must bound the additional
error caused by recovering later.

At threshold \(t\), we charge each missed \(X\)-window an amount \(t\).
The sparse search can miss its desired start for two reasons.  A block's
optimal subpath may be too expensive for localization, or too few of its
windows without a center may have inexpensive local alignments.  We
charge the missed window to the first block where one of these failures occurs.
We bound the charge by the subpath cost in the first case and by the
costs of the expensive local alignments in the second.  None of these
first blocks contains another, so their charges use disjoint portions of
the optimum at a fixed threshold.  Both causes of failure disappear at
the final threshold, so every \(X\)-window is eventually recovered.

The neighborhoods and samples bound the charge at one
threshold by a constant multiple of \(D\) divided by the number of
thresholds.  Summing gives \(O(D)\) total charge for missed windows.  Moving from
\(t\) to the next threshold \((1+\eps)t\) increases each missed window's
threshold by only \(\eps t\).  The total increase in error from the missed windows is
therefore \(O(\eps D)\).  The selected recovery thresholds thus
sum to at most \((1+O(\eps))D\).

We select a recovered box for each nonempty rounded optimal \(Y\)-window.
Each selected \(Y\)-window lies inside its original optimal window, so the boxes
remain ordered even when they come from different thresholds.
When the rounded window is empty, its original \(Y\)-window has small
total gap cost.  We can omit its box and instead delete the \(X\)-window
and insert the original \(Y\)-window.  This costs at most the local optimum
plus a small fraction of the recovery threshold.  After paying for
rounding as well, the chain costs at most \(D\) plus roughly twice the
sum of the recovery thresholds.  This explains where the factor three comes from in
\cref{thm:main}.  The chunked DP finds a chain at least as cheap without
knowing the optimal alignment.

The dense and sparse phases reduce the number of comparisons, and the
simplified distance limits the time spent on each one.  The chunked DP
evaluates only the generated box transitions, so combining thresholds
preserves these running-time savings and gives the bound in
\cref{thm:main}.

\paragraph{Organization.}
\Cref{sec:preliminaries} establishes the notation and basic properties of
weighted edit distance.  \Cref{sec:algorithm} presents the main algorithm
and the chunked DP that combines its boxes.  \Cref{sec:rough} obtains the
rough distance estimate used to limit the number of guesses.
\Cref{sec:simplified} develops the simplified distance for comparing long
windows efficiently, and \cref{sec:grid} constructs the candidate
\(Y\)-windows using clipped mass.  \Cref{sec:threshold} develops the dense
and sparse phases and establishes their sampling and recovery guarantees.
\Cref{sec:proof} combines these results to prove the approximation
guarantee and bound the running time.
\section{Preliminaries}\label{sec:preliminaries}

For a string \(Z\), we write \(Z[j]\) for its character at position \(j\), where
\(0\le j<|Z|\).  A cut \(j\in\{0,\ldots,|Z|\}\) is the boundary after
the first \(j\) characters.  For \(0\le a\le b\le|Z|\), the half-open
fragment \(Z[a\mathinner{.\,.}b)\) contains the characters at positions
\(a,\ldots,b-1\); its length is \(b-a\), and it is empty when \(a=b\).
A \emph{window} \(Z[a\mathinner{.\,.}b)\) consists of this fragment together
with a reference to \(Z\) and its endpoint cuts \(a,b\).  Characters within
a window are indexed locally from zero, so its character at position \(j\)
is \(Z[a+j]\).

We fix a sufficiently large absolute constant \(C\).  Throughout the
analysis, \(\eps\) denotes the accuracy parameter of the algorithm, and we
assume \(0<\eps\le1/C\).  We prove a \((3+O(\eps))\)-approximation, and
\cref{sec:total} rescales \(\eps\) to obtain \cref{thm:main}.  The constant
\(C\) also sets the sampling and localization parameters of
\cref{sec:threshold}.  Every requirement on \(C\) is a lower bound by an
absolute number, so a single constant serves all of them.
The notation \(\softO\) suppresses factors polylogarithmic in \(n/\eps\).

We use a unit-cost
RAM model with oracle access to \(\rho\), in which metric
queries and arithmetic operations or comparisons involving queried costs
take constant time.
For our algorithm, \emph{with high probability} means probability at least
\(1-n^{-10}\).
For every choice of randomness, the output is the cost of an edit sequence
and hence an upper bound on weighted edit distance.

\subsection{Metric edit distance and matchings}

We may restrict the metric to the symbols occurring in \(X\) or \(Y\),
together with \(\gap\). Any intermediate symbol not in the
input can be eliminated by the triangle inequality.  Define the \emph{gap
cost} of symbol \(a\) by
\[
                         g(a)=\rho(a,\gap).
\]
The input strings do not contain \(\gap\). Since \(\rho\) is a metric, the gap
cost of every input character is strictly positive, although there is no
uniform lower bound.
For strings \(U,V\), the alignment graph (edit grid) has
vertices \((i,j)\in\{0,\ldots,|U|\}\times\{0,\ldots,|V|\}\) and forward edges
\begin{align*}
 (i,j)&\to(i+1,j) &&\text{of cost }g(U[i]),
     &&0\le i<|U|,\quad 0\le j\le|V|,\\
 (i,j)&\to(i,j+1) &&\text{of cost }g(V[j]),
     &&0\le i\le|U|,\quad 0\le j<|V|,\\
 (i,j)&\to(i+1,j+1) &&\text{of cost }\rho(U[i],V[j]),
     &&0\le i<|U|,\quad 0\le j<|V|.
\end{align*}
An alignment is a source-to-sink path.  The minimum cost of an alignment is
\(\ED_\rho(U,V)\), abbreviated \(\ED(U,V)\).  The triangle inequality gives
\[
 |g(a)-g(b)|\le\rho(a,b)\le g(a)+g(b).
\]
It also makes \(\ED\) a metric on strings. To see this, compose two alignments by
synchronizing their copies of the intermediate string and apply the triangle
inequality columnwise.

For a string \(Z\), define its \emph{gap mass}
\[
                    M(Z)=\sum_{j=0}^{|Z|-1}g(Z[j]),
\]
and use \(M_Z(j)=M(Z[0\mathinner{.\,.}j))\) for \emph{prefix mass}.  Deleting all of \(U\) and
inserting all of \(V\) costs \(M(U)+M(V)\).

Aligning \(a\) with \(b\) costs \(\rho(a,b)\) instead of \(g(a)+g(b)\).
We call the saving their \emph{matching weight} and define
\[
             w(a,b)=g(a)+g(b)-\rho(a,b).
\]
The triangle inequality implies
\[
                  0\le w(a,b)\le2\min\{g(a),g(b)\}.
\]
A monotone matching is an order-preserving, one-to-one pairing of
characters at specified positions in \(U\) and \(V\).  Paired characters need
not carry the same symbol; a pair represents either a match or a substitution.
The matching's induced edit cost is
\begin{equation}\label{eq:matching-cost}
 M(U)+M(V)-\sum_{(a,b)\in\cM}w(a,b).
\end{equation}
We write \(\operatorname{cost}(\cM)\) for this quantity.
Conversely, every alignment becomes such a matching after deleting its gap
columns.  It follows that
\[
 \ED(U,V)=M(U)+M(V)
   -\max_{\cM\text{ monotone}}\sum_{(a,b)\in\cM}w(a,b).
\]
In particular, a matched pair containing a character \(a\) can save at
most \(2g(a)\).

\subsection{Optimal subpaths and clipped mass}\label{sec:optimal-subpaths}

We assume \(X\ne\varnothing\) for this construction and fix an integer
\(1\le d\le n\), whose value is set in \cref{eq:final-parameters}.
The algorithm partitions \(X\) into
\(q=\lceil |X|/d\rceil\ge1\) consecutive \(X\)-windows \(x_1,\ldots,x_q\), each
of length at most \(d\), with every nonfinal window of length \(d\).  We call any window of \(Y\) a \(Y\)-window.
For the analysis, we let
\(D=\ED(X,Y)\) and fix an optimal path \(\pi\).
At each interior boundary between \(X\)-windows, choose the first vertex
of \(\pi\) with that \(X\)-coordinate.  Together with the source and sink,
these vertices partition \(\pi\) into consecutive subpaths.
These vertices have \(Y\)-coordinates
\[
                     0=z_0^\star\le z_1^\star\le\cdots\le z_q^\star=|Y|.
\]
For each \(i\), let \(y_i=Y[z_{i-1}^\star\mathinner{.\,.}z_i^\star)\) be the optimal \(Y\)-window,
let \(\pi_i\) be the intervening optimal subpath, and put
\(D_i=\operatorname{cost}(\pi_i)\), the cost of this subpath.  Then
\[
       \sum_{i=1}^qD_i=D,\qquad
       \ED(x_i,y_i) = D_i,\qquad
       y_1,\ldots,y_q\text{ occur in order and concatenate to }Y.
\]
The matching induced by each \(\pi_i\) has at most \(|x_i|\le d\) pairs.

An \(X\)-window can have arbitrarily large gap mass even when its length
is at most \(d\).  Capping each gap cost at \(t\) leaves smaller gap costs
unchanged and bounds the clipped mass of an \(X\)-window by \(dt\).
We use clipped mass to control the candidate search and retain the original
costs when evaluating edit sequences.

Fix a threshold \(t>0\) and let \(g_t(a)\) denote the \emph{clipped gap
cost} of a symbol \(a\).  For a string \(Z\) and a cut
\(0\le j\le|Z|\) we define the \emph{clipped mass} \(M_t(Z)\) and
\emph{clipped prefix mass} \(M_{t,Z}(j)\) by
\[
 g_t(a)=\min\{g(a),t\},\qquad
 M_t(Z)=\sum_{j=0}^{|Z|-1}g_t(Z[j]),\qquad
 M_{t,Z}(j)=M_t(Z[0\mathinner{.\,.}j)).
\]
We apply \(M_t\) to \(X\)-windows and \(Y\)-windows and use \(M_{t,X}\) and
\(M_{t,Y}\) for clipped prefix masses in the inputs.

For vertices \(u,v\) appearing in this order on a path \(P\), we write
\(P_{u\to v}\) for the subpath from \(u\) to \(v\), including both endpoints.

Each edit changes the difference between the two clipped prefix masses
by at most its cost.

\Needspace{8\baselineskip}
\begin{lemma}\label{lem:clipped-potential}
For every \(a,b\),
\[
 |g_t(a)-g_t(b)|\le\rho(a,b),\qquad g_t(a)\le g(a).
\]
Consequently, at every vertex \((i,j)\) of any \(X\)-to-\(Y\) edit path
\(P\),
\[
 \left|M_{t,Y}(j)-M_{t,X}(i)\right|
 \le \operatorname{cost}(P_{(0,0)\to(i,j)}).
\]
For a subpath from \((i,j)\) to \((i',j')\),
\[
 \left|M_t(X[i\mathinner{.\,.}i'))-M_t(Y[j\mathinner{.\,.}j'))\right|
 \le\operatorname{cost}(P_{(i,j)\to(i',j')}).
\]
\end{lemma}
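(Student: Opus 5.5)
The plan is to first prove the two pointwise inequalities and then obtain the path bounds by telescoping along edges.

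For \(g_t(a)\le g(a)\), the definition \(g_t(a)=\min\{g(a),t\}\) gives it immediately. For \(|g_t(a)-g_t(b)|\le\rho(a,b)\), note that \(x\mapsto\min\{x,t\}\) is \(1\)-Lipschitz on \(\R\), so
\[
|g_t(a)-g_t(b)|\le|g(a)-g(b)|.
\]
The triangle inequality \(|g(a)-g(b)|\le\rho(a,b)\), recorded in the preliminaries, then finishes it. We also use \(g_t(\gap)=0\), which holds since \(\rho(\gap,\gap)=0\). Hence \(|g_t(a)-g_t(\gap)|=g_t(a)\le g(a)=\rho(a,\gap)\), so the first inequality covers gap symbols as well.

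For the path bound, define the potential \(\Phi(i,j)=M_{t,Y}(j)-M_{t,X}(i)\), so that \(\Phi(0,0)=0\). We check that each edge changes \(\Phi\) by at most its cost in absolute value, in three cases.
\begin{itemize}
\item A horizontal edge \((i,j)\to(i+1,j)\) changes \(\Phi\) by \(-g_t(X[i])\), whose absolute value is at most \(g(X[i])\), the edge cost.
\item A vertical edge changes \(\Phi\) by \(g_t(Y[j])\le g(Y[j])\).
\item A diagonal edge changes \(\Phi\) by \(g_t(Y[j])-g_t(X[i])\), whose absolute value is at most \(\rho(X[i],Y[j])\) by the first inequality.
\end{itemize}
Summing these increments along \(P_{(0,0)\to(i,j)}\) and applying the triangle inequality for absolute values gives \(|\Phi(i,j)|\le\operatorname{cost}(P_{(0,0)\to(i,j)})\).

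The subpath statement follows the same way. By additivity of clipped mass,
\[
M_t(X[i\mathinner{.\,.}i'))=M_{t,X}(i')-M_{t,X}(i),
\]
and similarly for \(Y\). So the quantity to bound is \(|\Phi(i',j')-\Phi(i,j)|\), and telescoping the edge increments over the edges of \(P_{(i,j)\to(i',j')}\) bounds it by that subpath's cost.

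No step is a real obstacle. The only points needing care are the Lipschitz property of clipping and the check that \(g_t(\gap)=0\), so that indels fit the same inequality.
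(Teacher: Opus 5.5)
Your proposal is correct and follows the paper's proof: the paper also uses the $1$-Lipschitz property of truncation plus the triangle inequality for the pointwise bound. It then observes that each deletion, insertion, or substitution edge changes $M_{t,Y}-M_{t,X}$ by at most its cost and telescopes to get both path statements. Your remark that $g_t(\gap)=0$ is harmless but not needed, since your edge-by-edge check already handles indels directly via $g_t\le g$.
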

\begin{proof}
Truncation \(z\mapsto\min\{z,t\}\) is one-Lipschitz on
\(\R_{\ge0}\), and the triangle inequality gives
\(|g(a)-g(b)|\le\rho(a,b)\).  Along a deletion, insertion, or substitution edge, the
absolute change of \(M_{t,Y}-M_{t,X}\) is at most that edge's cost.
Telescoping proves both path statements.
\end{proof}

We use the prefix form of the lemma to restrict candidate starts in \(Y\) using a
global guess \(\Delta\), and its subpath form to constrain the clipped mass of a
\(Y\)-window using the cost \(D_i\) of the optimal subpath.
\section{Main algorithm}\label{sec:algorithm}

We present the full algorithm (\cref{alg:master}) using boxes to represent local alignments.
A box records an \(X\)-window, a \(Y\)-window, and an upper bound on the
cost of aligning them.  A \emph{box} for \(x_i\) is a tuple
\[
             r=(i,a_r,e_r,\lambda_r),
             \qquad 0\le a_r\le e_r\le |Y|,
\]
whose label satisfies
\[
                   \lambda_r\ge\ED(x_i,Y[a_r\mathinner{.\,.}e_r)).
\]
The chunked DP selects boxes whose windows occur in order in both
strings.  It pays their labels, deletes skipped \(X\)-windows, and inserts
uncovered characters of \(Y\).

\subsection{Guesses, thresholds, and the main loop}\label{sec:main-loop}

We use the partition \(x_1,\ldots,x_q\) of \(X\) into windows of length
at most \(d\) from \cref{sec:optimal-subpaths}.  A guess \(\Delta\) of the
total distance \(D\) restricts where their candidate \(Y\)-windows can
start.  We obtain logarithmically many guesses by repeatedly halving a
positive estimate \(\widehat D\) from \emph{RoughUpperBound}, forming the
family \(\mathcal G(\widehat D)\) specified in \cref{sec:rough}.  For each
guess, we search at the geometrically increasing thresholds
\(t_0,\ldots,t_L\) from \cref{eq:threshold-ladder}.  The accuracy
parameter \(\eps\) controls their spacing.  Increasing
the threshold allows the recovery of \(X\)-windows whose optimal subpaths
are too expensive at earlier thresholds.

At a fixed threshold \(t\), we build the grid \(G_t\) and the global tag
family \(\cY_t^{\rm tag}\) as in \cref{sec:grid}.  The \emph{prefix data}
consist of the clipped prefix masses of \(X\) and \(Y\), the \(X\)-window
bins, and the prefix counts and positions of retained characters in \(Y\)
from \cref{sec:simplified}.  The acceptance test from
\cref{sec:simplified,sec:grid} guarantees an edit sequence of cost at most
\(t\) from an \(X\)-window to each of its accepted candidates.
\emph{DensePhase} (\cref{alg:dense-phase}) uses sampled candidates as
centers, so that windows assigned the same center can share its
comparisons with their candidates.  These shared comparisons give the
dense boxes, which the phase returns as the list \(\mathcal D_t\) together
with the index set \(\cX'\) of windows left without a center.

\emph{SparsePhase} (\cref{alg:sparse-phase}) searches for boxes for these
residual windows while keeping \(\cX'\) fixed.  The recursion starts with
all \(X\)-window indices and \(G_t\) as its list of candidate starts.
It divides the index interval into consecutive blocks and processes only
the residual windows within each block.  It returns the list
\(\mathcal S_t\) of boxes found by the recursion.  This search
relies on each residual window being \(t\)-sparse, that is, having fewer
than \(k\) accepted tags (\cref{sec:threshold}).
To bound its running time on every execution, a call to
\emph{ProcessXWindow} abandons the guess if it finds its \(k\)-th distinct
accepted tag.  The algorithm then discards that guess's boxes and
continues with the next guess, retaining the best value already
obtained.  Both phases use the sampling and localization parameters of
\cref{sec:threshold}; the values of \(d,k\) are chosen in
\cref{sec:total}.

We collect in \(\mathcal B\) the boxes in \(\mathcal D_t\) and
\(\mathcal S_t\) from successive thresholds.  Once the threshold sequence for a guess is complete,
\emph{ChunkedDP}\((\mathcal B)\) returns the minimum chain cost among these
boxes.  The chain can therefore combine boxes found
at different thresholds.  We keep the smallest value over completed
guesses, starting with the cost of the all-gap edit sequence that deletes
every character of \(X\) and inserts every character of \(Y\).  This is
also the value returned when either input is empty.

\begin{algorithm}[H]
\caption{Weighted edit distance approximation}\label{alg:master}
\small
\begin{algorithmic}[1]
\Require strings \(X,Y\), metric \(\rho\), and accuracy parameter \(0<\eps\le1/C\)
\State \(V\gets M(X)+M(Y)\)
\If{\(X\) or \(Y\) is empty}
  \State \Return \(V\)
\EndIf
\State Set \(d,k\) and the sampling and localization parameters
\State Partition \(X\) into windows of length \(d\)
\State \(\widehat D\gets\Call{RoughUpperBound}{X,Y}\)
\If{\(\widehat D=0\)}
  \State \Return \(0\)
\EndIf
\Statex
\For{\(\Delta\in\mathcal G(\widehat D)\)}
  \State Set the thresholds \(t_0,\ldots,t_L\) for \(\Delta\)
  \State \(\mathcal B\gets\varnothing\)
  \For{\(t=t_0,t_1,\ldots,t_L\)}
    \State Build \(G_t\), \(\cY_t^{\rm tag}\), and the prefix data for \(t\)
    \State \((\mathcal D_t,\cX')\gets\Call{DensePhase}{\Delta,t}\)
    \State \(\mathcal S_t\gets\Call{SparsePhase}{\Delta,t,\cX'}\)
    \State Add the boxes in \(\mathcal D_t\) and \(\mathcal S_t\) to \(\mathcal B\)
  \EndFor
  \Statex
  \State \(V_\Delta\gets\Call{ChunkedDP}{\mathcal B}\)
  \State \(V\gets\min\{V,V_\Delta\}\)
\EndFor
\Statex
\State \Return \(V\)
\end{algorithmic}
\end{algorithm}

\subsection{Boxes and the chunked DP}\label{sec:aggregation}

The chunked DP coarsens the edit-distance DP to the \(X\)-windows
and uses boxes as shortcut transitions.
A threshold grid may contain every cut of \(Y\).  The full
coarsened table can therefore have \(\Theta(n^2/d)\) cells, which exceeds
the running-time budget for our choice of \(d\).  We use sparse dynamic
programming to evaluate only the box transitions
(\cref{lem:chain-dp}).

A nonempty \emph{box chain} \(\mathcal C=(r_1,\ldots,r_m)\), where
\(r_j=(i_j,a_j,e_j,\lambda_j)\), satisfies
\[
 i_1<\cdots<i_m,
 \qquad e_j\le a_{j+1}\quad(1\le j<m).
\]
The chain pays its box labels, deletes skipped \(X\)-windows, and inserts
the portions of \(Y\) before, between, and after its chosen windows.
To express the deletion costs, put
\[
 \Gamma_i=\sum_{i'\le i}M(x_{i'}),\qquad \Gamma_0=0.
\]
Thus \(\Gamma_i\) is the cost of deleting the first \(i\) \(X\)-windows,
and the chain cost is
\begin{align}
 \operatorname{cost}(\mathcal C)
 ={}&\Gamma_{i_1-1}+M_Y(a_1)+\lambda_1 \notag\\
 &+\sum_{j=2}^{m}\left[
       \Gamma_{i_j-1}-\Gamma_{i_{j-1}}
       +M_Y(a_j)-M_Y(e_{j-1})+\lambda_j\right] \notag\\
 &+\Gamma_q-\Gamma_{i_m}+M_Y(|Y|)-M_Y(e_m).
 \label{eq:chain-cost}
\end{align}
The empty chain has cost
\(\Gamma_q+M_Y(|Y|)=M(X)+M(Y)\).

\begin{lemma}\label{lem:chain-sound}
Every box chain defines an edit sequence from \(X\) to \(Y\) of cost
at most its chain cost.  Consequently,
\[
       \ED(X,Y)\le\min_{\mathcal C}\operatorname{cost}(\mathcal C)
\]
for every box collection.
\end{lemma}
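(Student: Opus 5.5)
We prove the lemma by turning a chain into one explicit edit path in the alignment graph of \(X\) and \(Y\), assembled from pieces whose costs match the terms of \eqref{eq:chain-cost}. The empty chain is the path that deletes all of \(X\) and then inserts all of \(Y\); its cost is \(M(X)+M(Y)\), as claimed.

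For a nonempty chain \((r_1,\ldots,r_m)\) with \(r_j=(i_j,a_j,e_j,\lambda_j)\), let \(s_i\) denote the cut of \(X\) at the start of window \(x_i\), so \(x_i=X[s_i\mathinner{.\,.}s_{i+1})\). We concatenate the following forward paths.

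1. From \((0,0)\), delete \(X[0\mathinner{.\,.}s_{i_1})\). This costs \(\Gamma_{i_1-1}\).
2. Insert \(Y[0\mathinner{.\,.}a_1)\), costing \(M_Y(a_1)\), which reaches \((s_{i_1},a_1)\).
3. For each \(j\), use an optimal alignment of \(x_{i_j}\) with \(Y[a_j\mathinner{.\,.}e_j)\), translated to start at \((s_{i_j},a_j)\). Its cost is \(\ED(x_{i_j},Y[a_j\mathinner{.\,.}e_j))\le\lambda_j\) by the box invariant, and it ends at \((s_{i_j+1},e_j)\).
4. Between boxes \(j-1\) and \(j\), delete \(X[s_{i_{j-1}+1}\mathinner{.\,.}s_{i_j})\), costing \(\Gamma_{i_j-1}-\Gamma_{i_{j-1}}\). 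Then insert \(Y[e_{j-1}\mathinner{.\,.}a_j)\), costing \(M_Y(a_j)-M_Y(e_{j-1})\).
5. After the last box, delete the remaining windows, costing \(\Gamma_q-\Gamma_{i_m}\), and insert \(Y[e_m\mathinner{.\,.}|Y|)\), costing \(M_Y(|Y|)-M_Y(e_m)\).

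Each piece is legal because coordinates never decrease. This uses exactly the chain conditions \(i_{j-1}<i_j\) (hence \(s_{i_{j-1}+1}\le s_{i_j}\)) and \(e_{j-1}\le a_j\), together with \(0\le a_j\le e_j\le|Y|\).

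The pieces are consecutive, so their concatenation is a source-to-sink path. Its cost is the sum of the piece costs, and summing the bounds above termwise matches \eqref{eq:chain-cost} with each \(\lambda_j\) replaced by the smaller true local cost. Hence the path costs at most \(\operatorname{cost}(\mathcal C)\).

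Since \(\ED(X,Y)\) is the minimum over all paths, \(\ED(X,Y)\le\operatorname{cost}(\mathcal C)\) for every chain. Taking the minimum over all chains, including the empty one, gives the consequence.

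No step is hard; the only care needed is the index bookkeeping. Specifically, the deletion cost of windows \(x_{i_{j-1}+1},\ldots,x_{i_j-1}\) must equal \(\Gamma_{i_j-1}-\Gamma_{i_{j-1}}\), and a box's local alignment must correctly translate into a subpath of the global alignment graph between the two corner vertices.
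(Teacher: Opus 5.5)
Your proof is correct and takes the same approach as the paper. For each box you use a local edit sequence of cost at most its label, you delete the skipped \(X\)-windows, and you insert the uncovered portions of \(Y\). Your explicit concatenation of monotone paths in the alignment graph just makes precise two facts the paper states in words: the local edits act on disjoint windows, and the insertions cover each remaining character of \(Y\) exactly once.
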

\begin{proof}
For each box, use a local edit sequence of cost at most its label.
Delete every \(X\)-window not represented in
the chain, and insert the substrings of \(Y\) before, between, and after the
chosen \(Y\)-windows.  Strictly increasing \(X\)-window indices ensure that
the local edit sequences act on disjoint \(X\)-windows.  Since the chosen \(Y\)-windows
are disjoint and ordered, these insertions cover each remaining
character of \(Y\) exactly once.  The resulting
edit sequence has cost at most the chain cost in \cref{eq:chain-cost}.
\end{proof}

\begin{lemma}\label{lem:chain-dp}
Given \(P_{\rm box}\) boxes enumerated by increasing \(X\)-window index,
the minimum box-chain cost can be computed in \(O(n+P_{\rm box}\log n)\) time.
\end{lemma}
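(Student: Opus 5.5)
The plan is a sparse DP over the boxes, processed in order of \(X\)-window index, with a search structure keyed by \(Y\)-cuts. First, in \(O(n)\) time, precompute the prefix sums \(\Gamma_0,\ldots,\Gamma_q\) and \(M_Y(0),\ldots,M_Y(|Y|)\). Then every term of \cref{eq:chain-cost} is available in constant time.

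Rewrite the chain cost as a sum of per-box contributions, using potentials. For a box \(r=(i,a,e,\lambda)\), define \(f(r)\) as the minimum, over chains ending in \(r\), of the cost accrued through \(r\), normalized so that the trailing deletions and insertions are \(\Gamma_q-\Gamma_i+M_Y(|Y|)-M_Y(e)\). Then
\[
 f(r)=\Gamma_{i-1}+M_Y(a)+\lambda+\min\Bigl\{0,\ \min_{r'}\bigl(f(r')-\Gamma_{i'}-M_Y(e')\bigr)\Bigr\},
\]
where the inner minimum ranges over boxes \(r'=(i',a',e',\lambda')\) with \(i'<i\) and \(e'\le a\). The \(0\) term corresponds to starting a new chain, since \(\Gamma_0=M_Y(0)=0\). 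The answer is
\[
 \min\Bigl\{\Gamma_q+M_Y(|Y|),\ \min_r\bigl(f(r)+\Gamma_q-\Gamma_i+M_Y(|Y|)-M_Y(e)\bigr)\Bigr\},
\]
which covers the empty chain and every chain ending at \(r\). Correctness follows by induction on \(i\): each term telescopes exactly into \cref{eq:chain-cost}, because the bracketed transition term \(\Gamma_{i_j-1}-\Gamma_{i_{j-1}}+M_Y(a_j)-M_Y(e_{j-1})\) splits into a part depending only on \(r_j\) and a part depending only on \(r_{j-1}\).

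To evaluate the inner minimum, maintain a prefix-minimum structure over \(Y\)-cuts \(0,\ldots,|Y|\). A Fenwick tree or segment tree of size \(|Y|+1\) supporting point "decrease-to" updates and prefix-min queries works, giving \(O(\log n)\) per operation after \(O(n)\) initialization. When box \(r'\) is finalized, insert the value \(f(r')-\Gamma_{i'}-M_Y(e')\) at key \(e'\). A query for \(r\) is then the prefix minimum over keys \(\le a\).

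The one subtlety, and the main thing to get right, is enforcing the strict inequality \(i'<i\). Several boxes may share the same \(X\)-window index, and a box with \(e'\le a\) at the same index must not be used. Since the boxes arrive sorted by index, we handle this in groups. For a group of boxes with equal \(i\), first answer all of its queries against the structure. Only after the whole group is processed do we insert its values.

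Each box then causes one query and one update, so the DP costs \(O(P_{\rm box}\log n)\). Together with the \(O(n)\) prefix sums and tree initialization, the total is \(O(n+P_{\rm box}\log n)\), as the lemma claims.
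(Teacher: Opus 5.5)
Your proposal is correct and follows the paper's proof essentially step for step. The steps match in three places: you split each transition into a box-dependent part \(\Gamma_{i-1}+M_Y(a_r)+\lambda_r\) and a predecessor-dependent part \(\Phi-\Gamma_{i'}-M_Y(e)\); you keep a Fenwick tree over the \(Y\)-cuts with point-minimum updates and prefix-minimum queries; and you defer insertion of each equal-index group to enforce \(i'<i\). The only cosmetic difference is that the paper represents starting a chain (and the empty chain) by initializing \(K(0)=0\) in the tree, whereas you write it as an explicit \(\min\{0,\cdot\}\) term together with a separate empty-chain option in the final answer.
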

\begin{proof}
Consider appending a box \(r=(i,a_r,e_r,\lambda_r)\) to a chain whose last
box belongs to an earlier \(X\)-window \(x_{i'}\), with \(i'<i\), and ends at
a cut \(e\le a_r\).  Let \(\Phi\) be the chain's prefix cost, excluding the
final deletions and insertions.  Appending \(r\) deletes the skipped
\(X\)-windows, inserts \(Y[e\mathinner{.\,.}a_r)\), and applies the local
edit sequence of \(r\), so the extended chain has prefix cost
\[
 \bigl(\Gamma_{i-1}+M_Y(a_r)+\lambda_r\bigr)
      +\bigl(\Phi-\Gamma_{i'}-M_Y(e)\bigr).
\]
The first term depends only on \(r\).  After subtracting the prefix gap
masses, the predecessor contributes the same second term to every box
that can follow it.  We therefore need only the minimum of that term
over predecessors ending at a cut \(e\le a_r\).
For each cut \(e\), let \(K(e)\) be the minimum of \(\Phi-\Gamma_{i'}-M_Y(e)\)
over the chains found so far whose last box ends at \(e\); initially
\(K(0)=0\), which represents the empty chain, and \(K(e)=+\infty\) for
\(e>0\).  We process the \(X\)-windows in order.  For each box \(r\) of \(x_i\),
the cheapest prefix cost of a chain ending at \(r\) is
\begin{equation}\label{eq:box-recurrence}
 \Phi(r)=\Gamma_{i-1}+M_Y(a_r)+\lambda_r+\min_{0\le e\le a_r}K(e).
\end{equation}
After all boxes of \(x_i\) are evaluated, we set
\(K(e_r)\gets\min\{K(e_r),\Phi(r)-\Gamma_i-M_Y(e_r)\}\) for each of them.
Deferring these updates enforces \(i'<i\).  After the last \(X\)-window, the
minimum chain cost is \(\Gamma_q+M_Y(|Y|)+\min_eK(e)\).  A Fenwick tree over
the cuts of \(Y\) supports these point-minimum updates and prefix-minimum
queries in \(O(\log n)\) time each, which gives the stated bound.
\end{proof}

Given \(\mathcal B\), \emph{ChunkedDP}\((\mathcal B)\) groups its boxes by
\(X\)-window index in \(O(q+|\mathcal B|)\) time and evaluates them with the
DP of \cref{lem:chain-dp}, which returns the minimum box-chain cost,
including the empty chain.  \(\mathcal B\) may contain several boxes for the
same pair of windows, found in different calls or at different thresholds.
They need no special treatment because the DP takes minima.
\section{A rough distance estimate}\label{sec:rough}

The main algorithm needs a rough upper bound to locate useful distance
guesses without searching through the numerical range of the edit costs.
We obtain it by removing characters at or below a random gap-cost cutoff,
running banded dynamic programming on the retained strings, and adding
the full gap costs of the removed characters.

If either input is empty, the routine returns \(M(X)+M(Y)\).
For the remaining case, we fix a failure probability \(0<\delta_{\rm rg}<1\).

\begin{theorem}\label{thm:rough-guide}
The routine below always returns a value \(\widehat D\ge D\) that is the cost of an
edit sequence.  If \(D=0\), it returns zero deterministically.  If
\(D>0\), then with probability at
least \(1-\delta_{\rm rg}\),
\[
 \widehat D<32(1+n)D.
\]
Its running time is
\[
 O\!\left(n\log n
       \log\frac{n+1}{\delta_{\rm rg}}\right).
\]
\end{theorem}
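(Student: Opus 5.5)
The routine has to be designed here as well as analysed, so I would build it from three pieces: random cutoffs, a banded dynamic program, and a binary search over the gap costs that occur in the input.

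\emph{The routine.} Sort the distinct gap costs \(c_1<\cdots<c_m\) of the characters of \(X\) and \(Y\), and put \(c_0=0\). For a level \(\tau\), draw a cutoff \(\theta\) uniformly from \([\tau,2\tau]\). Let \(X_\theta,Y_\theta\) be the subsequences of characters whose gap cost exceeds \(\theta\). Run the weighted DP on \(X_\theta,Y_\theta\) restricted to a band of constant width \(b\) around the diagonal of the retained-index grid. Add the full gap mass of the removed characters, and call the result \(F(\theta)\). Every value \(F(\theta)\) is the cost of an explicit edit sequence, hence at least \(D\). The routine outputs the minimum over all evaluations, together with \(M(X)+M(Y)\), so the claim \(\widehat D\ge D\) and the edit-sequence claim hold deterministically. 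The level \(\tau=c_0=0\) keeps every character. If \(D=0\), then \(X=Y\) symbolwise, because all gap costs are positive and \(\rho\) is a metric; the diagonal lies in the band and gives \(0\). I therefore always evaluate level \(0\) first, which yields the deterministic \(D=0\) case.

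\emph{Key estimate.} Fix \(\tau\ge D\) and the optimal path \(\pi\). Write \(\cM\) for its matching.
\begin{enumerate}
\item A pair \((a,b)\in\cM\) is split by \(\theta\) (one side retained, the other removed) only if \(\theta\) lies between \(g(a)\) and \(g(b)\). This happens with probability at most \(|g(a)-g(b)|/\tau\le\rho(a,b)/\tau\). Hence the expected number of split pairs is at most \(D/\tau\le1\).
\item Retained characters left unmatched by \(\pi\) cost more than \(\tau\ge D\) each, so there are none.
\end{enumerate}
With probability at least \(1/2\) there is no split pair, by Markov's inequality. In that event \(\pi\) restricted to retained characters is a matching of \(X_\theta\) with \(Y_\theta\) that is a perfect diagonal. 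It therefore lies in the band, and its restriction pays at most \(D\). Removed characters are paid in full. Their lost saving is at most \(\sum w(a,b)\le 2\sum\min g\le 4\theta\cdot n\le 8\tau n\), summed over pairs with a removed side. So \(F(\theta)\le D+8n\tau\). Repeating \(O(\log((n+1)/\delta_{\rm rg}))\) times per level amplifies this to high confidence.

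\emph{Search, the main obstacle.} The difficulty is finding a level \(\tau\) with \(\tau\ge D\) and \(\tau=O(D)\) without scanning all \(m\) levels, which would cost \(\Theta(n^2)\) in total, while the numerical range may be huge. I would binary search over \(j\in\{0,\dots,m\}\) with the predicate
\[
P(j):\ \text{the minimum of the repeated } F\text{-values at level } c_j \text{ is at most } D+8nc_j\ \text{--- replaced by the checkable test } F\le 16(1+n)c_j .
\]
The search returns some \(j\) with \(P(j)\) true and \(P(j-1)\) false. I would handle the top level \(j=m\) by also trying \(\tau=M(X)+M(Y)\).

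The delicate point is that \(c_j\) may be far larger than \(D\) while \(c_{j-1}<D\). Here I would use invariance: any cutoff in \([c_{j-1},c_j)\) retains exactly the same characters. Since no gap cost lies in \((c_{j-1},c_j)\), no matched pair can be split by such a cutoff, and every retained character has gap cost at least \(c_j\). The key estimate therefore reapplies with \(\tau\) effectively equal to \(\max(D,c_{j-1})\). Combined with the failed test at \(j-1\), this should bound the returned value by \(32(1+n)D\); verifying that this chaining actually closes is the step I expect to need the most care.

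\emph{Running time.} There are \(O(\log n)\) levels, each with \(O(\log((n+1)/\delta_{\rm rg}))\) repetitions. Each banded DP takes \(O(n)\) time after an \(O(n\log n)\) sort, which gives the stated bound.
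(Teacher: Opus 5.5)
\textbf{The gap is in the jump case.} The step you flag as delicate is where the proof is missing, and the justification you sketch for it is false.

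Invariance only says that all cutoffs in \([c_{j-1},c_j)\) produce the same retained strings. It does not say that no pair of \(\pi\) is split. A pair \((a,b)\) of \(\pi\) with \(g(a)\le c_{j-1}\) and \(g(b)\ge c_j\) is split by \emph{every} such cutoff.

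Nor does ``the key estimate with \(\tau=\max(D,c_{j-1})\)'' help. At \(\tau=D\) your Markov bound is vacuous. Already in the key estimate you need \(\tau\ge 2D\), not \(\tau\ge D\), to get probability \(1/2\). Moreover, since every level-\((j-1)\) cutoff gives the same strings, what is needed here is a deterministic statement.

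The missing observation is quantitative: a split pair has \(\rho(a,b)\ge g(b)-g(a)\ge c_j-c_{j-1}\). Suppose \(c_j>2c_{j-1}\) and \(c_j\ge 2D\).
\begin{enumerate}[leftmargin=*]
\item Then \(\rho(a,b)>c_j/2\ge D\), which is impossible for a column of \(\pi\), so no pair is split.
\item Every retained character has gap cost at least \(c_j>D\), so none is a gap column of \(\pi\).
\item Hence the restriction of \(\pi\) is the diagonal of the retained grid.
\item All level-\((j-1)\) cutoffs lie in \([c_{j-1},2c_{j-1}]\subset[c_{j-1},c_j)\), so every level-\((j-1)\) value is at most \(D+M_{\rm cut}(c_{j-1})\le D+nc_{j-1}\).
\item With the corrected Markov step, the failed test gives \(c_{j-1}<2D\), so this value is below \((1+2n)D\).
\end{enumerate}

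\textbf{The complementary case is also untreated.} If \(c_j\le 2c_{j-1}\) or \(c_j<2D\), then \(c_j<4D\).
\begin{enumerate}[leftmargin=*]
\item For \(c_j<2D\), the passing test gives a value at most \(16(1+n)c_j<32(1+n)D\).
\item For \(2D\le c_j<4D\), you need that some repetition at level \(c_j\) had no split pair. That repetition's value is at most \(D+2nc_j<(1+8n)D\).
\end{enumerate}
Both the jump case and this subcase use values other than the one certified by the passing test. So it is essential that you output the minimum of \emph{all} evaluated values. Your high-probability event should be ``at every tested level \(c\ge 2D\) some repetition has no split pair'', not merely ``the test passes''. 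Prove it by a union bound over the at most \(m+1\) levels, each tested at most once with fresh randomness. This event also gives that failed levels satisfy \(c<2D\).

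You should also sample the cutoff from a finite grid as the paper does. Exact continuous uniform sampling is not available in the RAM model, and discretization adds a \(1/G\) term to the split probability, so the constants need slight retuning.

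\textbf{Comparison with the paper.} Once repaired, your route is genuinely different. The paper accepts an attempt only when \(D_{\rm band}(\theta)\le\theta\). In the jump case \(R_U>2R_L\) it runs one more band computation at \(R_L\) with acceptance threshold \(R_U\). This relies on exactness of the width-one band when all retained gap costs are at least \(R_U\) (\cref{lem:rg-band}) and on \(D(R_L)<7D\) from \cref{lem:rg-projection}. Your version needs neither band exactness nor the expectation of \(D(\theta)\). It uses only that, without split pairs, the projected optimal path is the diagonal, and it covers the jump by reusing the failed level's own evaluations through the global minimum.
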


For the main algorithm, we take \(\delta_{\rm rg}=1/(2n^{10})\) and call
this specialization \emph{RoughUpperBound}.  The routine takes
\(\softO(n)\) time.  If \(D>0\), its estimate satisfies
\[
                         \widehat D<32(1+n)D\le64nD
\]
with probability at least \(1-\delta_{\rm rg}\).
For a positive estimate \(\widehat D\), the main algorithm considers
the guesses
\[
 \mathcal G(\widehat D)=
 \left\{\frac{\widehat D}{2^\nu}:0\le \nu\le\lceil\log_2(64n)\rceil+1\right\}.
\]
This family has \(O(\log n)\) members.  Whenever \(\widehat D<64nD\),
it contains a \emph{tight guess} \(D\le\Delta<2D\).

\subsection{Threshold simplification and exact bands}

For \(\theta\ge0\), let \(X^{>\theta},Y^{>\theta}\) be Kuszmaul's
threshold simplifications~\cite{Kuszmaul} of \(X,Y\), obtained by removing all
characters with gap cost at most \(\theta\) while preserving the remaining order.
Let \(M_{\rm cut}(\theta)\) be the total gap cost of the characters removed
from \(X\) and \(Y\) and put \(D(\theta)=\ED(X^{>\theta},Y^{>\theta})\).
Then \(M_{\rm cut}(\theta)\le n\theta\).  To extend an edit sequence on the
retained strings to the original strings, delete the removed characters of
\(X\) and insert the removed characters of \(Y\) in their original order.
This adds exactly \(M_{\rm cut}(\theta)\) to the cost.

The indices in the edit grid for \(X^{>\theta},Y^{>\theta}\) refer to positions in the
retained strings.  We restrict this grid to the band \(|i-j|\le1\), and let
\(D_{\rm band}(\theta)\) be the cost of its shortest path, or \(+\infty\)
if its sink is unreachable.  The standard weighted recurrence evaluates
this band in \(O(n)\) time.

\begin{lemma}\label{lem:rg-band}
For every \(\theta>0\),
\(D_{\rm band}(\theta)\ge D(\theta)\).  If
\(D(\theta)\le\theta\), equality holds.  More generally, if all characters of
\(X^{>\theta},Y^{>\theta}\) have gap cost at least \(R>0\), equality holds whenever
\(D(\theta)\le R\).
\end{lemma}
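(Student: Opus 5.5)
The inequality \(D_{\rm band}(\theta)\ge D(\theta)\) is immediate. Every path in the band \(|i-j|\le1\) is also a source-to-sink path in the full edit grid of \(X^{>\theta},Y^{>\theta}\), so its cost is at least \(D(\theta)\). When the sink is unreachable, \(D_{\rm band}(\theta)=+\infty\) and the inequality holds trivially. The first equality claim is a special case of the general one: every retained character has gap cost strictly greater than \(\theta\), so we may take \(R=\theta\) (or any \(R\le\theta\)). Once the general claim holds for gap costs \(\ge R\) and \(D(\theta)\le R\), it covers the case \(D(\theta)\le\theta\). It therefore suffices to prove the general statement.

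For the general statement, assume every retained character has gap cost at least \(R>0\), and fix an optimal path \(P\) for \(\ED(X^{>\theta},Y^{>\theta})\) of cost \(D(\theta)\le R\). The plan is to show that \(P\) stays inside the band \(|i-j|\le1\), which gives \(D_{\rm band}(\theta)\le D(\theta)\). The key observation is that every horizontal or vertical edge of \(P\) costs at least \(R\), since it is a gap operation on a retained character. Suppose \(P\) leaves the band, so some vertex \((i,j)\) has \(|i-j|\ge2\). Reaching it from \((0,0)\) requires at least two gap edges, because diagonal edges preserve \(i-j\). This costs at least \(2R>R\ge D(\theta)\), contradicting the cost of \(P\). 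Hence \(P\) uses at most one gap edge, its vertices satisfy \(|i-j|\le1\), and \(P\) lies in the band.

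The only subtle point, and the step I expect to need care, is the case with exactly one gap edge. Its cost \(R\) may equal \(D(\theta)\), so the path may have \(i-j\) equal to \(\pm1\), and that is permitted by the band. The bound must therefore be stated as ``at least two gap edges cost \(\ge 2R>R\)''. This strict inequality uses \(R>0\). The argument also uses that substitution edges are diagonal and keep \(i-j\) fixed. Finally, the sink \((|X^{>\theta}|,|Y^{>\theta}|)\) itself lies on \(P\), so the bound shows the lengths differ by at most one and the sink is reachable within the band. With \(P\) inside the band, \(D_{\rm band}(\theta)\le\operatorname{cost}(P)=D(\theta)\), and equality follows.
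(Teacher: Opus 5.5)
Your proposal is correct and follows the paper's argument. Restricting to the band cannot lower the optimum, and since \(i-j\) changes only on gap edges, each costing at least \(R\), an optimal path of cost at most \(R\) uses at most one gap edge and stays in the band \(|i-j|\le1\). The only organizational difference is that you obtain the \(D(\theta)\le\theta\) case from the general one by taking \(R=\theta\), while the paper observes directly that such a path has no gap edges at all. Both routes are valid.
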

\begin{proof}
Restricting the grid cannot lower the optimum.  Every gap edge in the edit
grid for the retained strings costs more than \(\theta\).  A path of cost at most
\(\theta\) therefore has no gap edges.  Since \(i-j\) changes only on a gap
edge, the whole path lies in the band.  If the retained characters instead
have gap cost at least \(R\), a path of cost at most \(R\) has at most one
gap edge and is still contained in the inclusive band.
\end{proof}

\subsection{Random cutoffs}

Removing characters can separate a matched pair and replace its
substitution cost by a gap cost.  We control the expected increase in
cost by sampling the cutoff from a finite grid in \([R,2R)\) at a scale
\(R>0\).

Let \(G\) be the least power of two satisfying \(G\ge2n\).
We draw \(J\) uniformly from
\(\{0,\ldots,G-1\}\) and put
\[
                         \theta_J=R(1+J/G).
\]
An attempt at scale \(R\) draws a fresh cutoff \(\theta_J\) and computes
\(D_{\rm band}(\theta_J)\).
It succeeds when \(D_{\rm band}(\theta_J)\le\theta_J\) and returns
\(D_{\rm band}(\theta_J)+M_{\rm cut}(\theta_J)\).  The banded DP is exact whenever
the retained strings have distance at most this acceptance threshold.

We bound the expected distance between the retained strings to control
the probability that an attempt fails.
The choice of \(G\) makes the discretization term \(nR/G\) in the
next lemma at most \(R/2\).  Taking a power of two allows exact
uniform sampling with \(\log_2G\) random bits.

\begin{lemma}\label{lem:rg-projection}
\[
                         \E_J[D(\theta_J)]
                         \le3D+\frac{nR}{G}.
\]
\end{lemma}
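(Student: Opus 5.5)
Fix an optimal alignment \(\pi\) of \(X\) and \(Y\) of cost \(D\), viewed as a monotone matching \(\cM\) together with deletions and insertions. For a cutoff \(\theta\), we restrict \(\pi\) to the retained strings \(X^{>\theta},Y^{>\theta}\) and bound the cost of the result. Unmatched retained characters keep their gap edges, and matched pairs with both characters retained keep their substitution edges. A matched pair \((a,b)\) with exactly one retained character, say \(a\), becomes a deletion of \(a\) at cost \(g(a)\) instead of \(\rho(a,b)\); pairs with both removed simply vanish. This restricted alignment is monotone, so \(D(\theta)\) is at most its cost. The plan is to bound the extra cost of the broken pairs pointwise and then average over \(J\).

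\emph{Pointwise bound.} For a broken pair with \(g(b)\le\theta<g(a)\), the triangle inequality gives \(g(a)\le\rho(a,b)+g(b)\le\rho(a,b)+\theta\). The extra cost is therefore at most \(g(b)\le\theta\), and since \(g(a)-g(b)\le\rho(a,b)\), we get \(g(a)\le\rho(a,b)+\theta\). So the restricted cost is at most \(D+\sum_{\text{broken}}\theta\). We refine the count of broken pairs by the event that \(\theta\) separates the two gap costs. A pair \((a,b)\) is broken only if \(\theta\in[\min\{g(a),g(b)\},\max\{g(a),g(b)\})\), an interval of length at most \(\rho(a,b)\). On this event its extra cost is \(g(a)-\rho(a,b)\le g(b)\le\theta\le 2R\).

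\emph{Averaging over \(J\).} The cutoffs \(\theta_J\) are spaced \(R/G\) apart in \([R,2R)\). The number of \(J\) with \(\theta_J\) in an interval of length \(\ell\) is at most \(\ell G/R+1\), so the probability of a break is at most \(\rho(a,b)/R+1/G\). The expected extra cost per pair is then at most \(2R(\rho(a,b)/R+1/G)=2\rho(a,b)+2R/G\). Summing over at most \(n/2\) matched pairs (each uses one character from each string) gives
\[
\E_J[D(\theta_J)]\le D+2\sum_{(a,b)\in\cM}\rho(a,b)+\frac{nR}{G}\le 3D+\frac{nR}{G},
\]
using \(\sum\rho(a,b)\le D\).

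The main thing to get right is the discretization count: the spacing \(R/G\) must turn the interval length into a probability with the additive \(1/G\). The number of pairs must also be bounded by \(n/2\) so that the additive term is exactly \(nR/G\) rather than \(2nR/G\). If the constant \(2R\) in the extra-cost bound is too loose, a sharper bound is available: use extra cost \(\le\min\{g(a),g(b)\}\le\theta<2R\), and for pairs with \(\rho(a,b)\ge R\) bound the extra cost by \(g(b)\le\rho(a,b)+\ldots\). The factor \(3\) leaves enough slack for either version.
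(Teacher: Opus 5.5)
Your proof is correct and follows essentially the same route as the paper. You project a fixed optimal alignment onto the retained strings and bound each separated pair's extra cost by the smaller gap cost, which is below \(2R\); you bound the separation probability by \(\rho(a,b)/R+1/G\) using the cutoff spacing \(R/G\), and sum over at most \(n/2\) matched pairs. Your closing remark about a ``sharper bound'' is unfinished, but it is not needed, since the \(2R\) bound already yields exactly \(3D+nR/G\).
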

\begin{proof}
Fix an optimal alignment and project it to the retained strings.  A gap column
never becomes more expensive.  Consider a nongap column \((a,b)\), and
assume without loss of generality that \(g(a)\le g(b)\).  Its endpoints are
separated only when \(g(a)\le\theta_J<g(b)\).  Since the cutoff spacing is
\(R/G\) and \(g(b)-g(a)\le\rho(a,b)\),
\[
 \Prb[g(a)\le\theta_J<g(b)]\le\rho(a,b)/R+1/G.
\]
If separated, \(b\) is aligned with a gap.  Relative to the original cost
\(\rho(a,b)\), the additional cost is \(g(b)-\rho(a,b)\le g(a)<2R\).  Thus
the expected projected cost of this column is at most \(3\rho(a,b)+2R/G\).
There are at most \(n/2\) nongap columns.  Summing proves the claim.
\end{proof}

\begin{lemma}\label{lem:rg-one-scale}
Every successful attempt at scale \(R\) returns the cost of an edit
sequence, and this value is smaller than \(2(1+n)R\).
If \(R\ge8D\), an attempt succeeds with probability at least
\(1/8\).
\end{lemma}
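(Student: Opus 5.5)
The lemma makes three claims, and I will prove them in order: soundness, the size bound, and the success probability.

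\emph{Soundness.} A successful attempt has \(D_{\rm band}(\theta_J)\le\theta_J<\infty\), so the banded DP produced an actual edit path between \(X^{>\theta_J}\) and \(Y^{>\theta_J}\) of cost \(D_{\rm band}(\theta_J)\). Restricting the grid only removes edges, so this path is a genuine alignment of the retained strings. Extending it by deleting the removed characters of \(X\) and inserting the removed characters of \(Y\) adds exactly \(M_{\rm cut}(\theta_J)\). The returned value is therefore the cost of an edit sequence from \(X\) to \(Y\).

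\emph{Size bound.} Success gives \(D_{\rm band}(\theta_J)\le\theta_J<2R\), since \(J\le G-1\) forces \(\theta_J<2R\). Every removed character has gap cost at most \(\theta_J\), and there are at most \(n\) of them, so \(M_{\rm cut}(\theta_J)\le n\theta_J<2nR\). Summing the two terms gives a value below \(2R+2nR=2(1+n)R\).

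\emph{Success probability.} Assume \(R\ge8D\). The attempt succeeds whenever \(D(\theta_J)\le\theta_J\). In that case \cref{lem:rg-band} (with \(\theta_J>0\)) gives \(D_{\rm band}(\theta_J)=D(\theta_J)\le\theta_J\). Since \(\theta_J\ge R\), it suffices to bound \(\Prb[D(\theta_J)>R]\). By \cref{lem:rg-projection} and \(G\ge2n\),
\[
\E_J[D(\theta_J)]\le3D+\frac{nR}{G}\le\frac{3R}{8}+\frac{R}{2}=\frac{7R}{8}.
\]
Markov's inequality then gives \(\Prb[D(\theta_J)>R]\le7/8\), so the attempt succeeds with probability at least \(1/8\).

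The only delicate point is checking that Markov's inequality leaves the needed slack, and the constant \(8\) in the hypothesis \(R\ge8D\) is exactly what provides it.
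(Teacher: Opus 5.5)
Your proof is correct and follows the paper's argument. Soundness comes from lifting the banded path, the size bound from \(\theta_J<2R\) together with \(M_{\rm cut}(\theta_J)\le n\theta_J\), and the success probability from \cref{lem:rg-projection} with \(G\ge2n\) plus Markov's inequality on the event \(D(\theta_J)>\theta_J\ge R\). The only cosmetic difference is that you justify soundness from the finiteness of the banded path, whereas the paper first invokes \(D_{\rm band}(\theta_J)=D(\theta_J)\) from \cref{lem:rg-band}; either suffices.
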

\begin{proof}
When an attempt succeeds, \cref{lem:rg-band} gives \(D_{\rm band}(\theta_J)=D(\theta_J)\).
Lifting the edit sequence proves feasibility.
The inequality \(\theta_J<2R\) gives the stated bound on the returned value.
A failed attempt has
\(D(\theta_J)>\theta_J\ge R\).  When \(R\ge8D\),
\cref{lem:rg-projection} and the choice of \(G\) give
\(\E[D(\theta_J)]\le7R/8\).
Markov's inequality therefore bounds the failure probability by \(7/8\),
proving the claim.
\end{proof}

We repeat attempts to make it unlikely that any sufficiently large tested
scale fails.  The search below tests at most \(n+1\) scales, so it
suffices to bound the failure probability of each such test by
\(\delta_{\rm rg}/(n+1)\).  Let \(m_{\rm rg}\) be the least positive
integer satisfying
\[
 \left(\frac78\right)^{m_{\rm rg}}
 \le\frac{\delta_{\rm rg}}{n+1}.
\]
A call \emph{TestScale}\((R)\) makes up to \(m_{\rm rg}\) independent
attempts and returns the value from the first successful attempt.
If every attempt fails, it returns \textsc{Fail}.  Thus, whenever
\(R\ge8D\), the call returns a value with probability at least
\(1-\delta_{\rm rg}/(n+1)\).

\subsection{Searching the input gap costs}

We cache and sort the distinct gap costs appearing in \(X\) and \(Y\),
\(g_{(1)}<\cdots<g_{(N)}\), and put \(g_{(0)}=g_{(1)}/2\).  Bisection acts on indices in
the list \(g_{(0)},\ldots,g_{(N)}\).  A test at index \(j\) means a call to
\emph{TestScale}\((g_{(j)})\).  Searching indices uses only
\(O(\log n)\) tests, regardless of the ratio between the gap costs.
However, adjacent list values may still differ by an arbitrarily large
factor.  The final step handles this case by using the fact that no
retained string changes between consecutive input gap costs.

The routine proceeds as follows.

\begin{enumerate}[leftmargin=*]
\item If \(X=Y\), return zero.  Otherwise \(D>0\) because \(\rho\)
is a metric.
\item Test \(g_{(0)}\).  If it returns a value, return it; otherwise mark index
zero \textsc{Fail}.  Test \(g_{(N)}\), which removes every character and
therefore returns a value.
\item Maintain a lower index labeled \textsc{Fail} and an upper index with
a stored successful value.  Repeatedly bisect this index interval.  If a test
returns \textsc{Fail}, replace the lower index; otherwise replace the upper
index and store the returned value.  No monotonicity of the random outcomes
is assumed.
Stop at adjacent indices \(j,j+1\) and set \(R_L=g_{(j)}\) and \(R_U=g_{(j+1)}\).
\item If \(R_U\le2R_L\), return the stored value from scale \(R_U\).
Otherwise compute \(D_{\rm band}(R_L)\) on the strings simplified at \(R_L\).
If \(D_{\rm band}(R_L)\le R_U\), return
\(D_{\rm band}(R_L)+M_{\rm cut}(R_L)\).  Otherwise return the stored value
from scale \(R_U\).
\end{enumerate}

\begin{proof}[Proof of \cref{thm:rough-guide}]
The lifting construction and \cref{lem:rg-band} show that all returned
values are costs of edit sequences.  The explicit equality test returns
zero whenever \(D=0\).  Suppose now that \(D>0\).

Conditional on all preceding tests, the next tested scale is fixed and
\emph{TestScale} uses fresh randomness.  By
\cref{lem:rg-one-scale} and the choice of \(m_{\rm rg}\), its failure
probability is at most \(\delta_{\rm rg}/(n+1)\) whenever
\(R\ge8D\).  Each of the at most \(n+1\) list entries is
tested at most once.  A union bound therefore shows that all tested
scales \(R\ge8D\) return values with probability at least
\(1-\delta_{\rm rg}\).  Assume this event for the remainder of the proof.
Every tested scale whose call returns \textsc{Fail} then satisfies
\[
                         R<8D.
\]
This is the only property of failed tests that the search needs;
successful and failed outcomes need not be monotone in \(R\).

Every cutoff used by \emph{TestScale}\((g_{(0)})\) is smaller than
\(g_{(1)}\), so no character is removed and a returned value is exact.
Otherwise, bisection ends with a failed lower scale \(R_L\) and a
successful upper scale \(R_U\).  If \(R_U\le2R_L\), then \(R_U<16D\),
so \cref{lem:rg-one-scale} gives \(\widehat D<32(1+n)D\).

Suppose \(R_U>2R_L\).  Since \(g_{(1)}=2g_{(0)}\), this case has \(R_L\ne g_{(0)}\).
Thus \(R_L,R_U\) are consecutive input gap costs.  All cutoffs in
\([R_L,R_U)\) produce the same retained strings, and every character in
those strings has gap cost at least \(R_U\).  Every cutoff sampled at scale
\(R_L\) lies in \([R_L,2R_L)\subset[R_L,R_U)\).  The retained strings are
therefore the same for every sampled cutoff, and the expectation in
\cref{lem:rg-projection} is exactly \(D(R_L)\).  That lemma and the bound
on the failed scale give
\[
 D(R_L)<7D,
 \qquad
 M_{\rm cut}(R_L)<8nD.
\]

The final band computation uses these retained strings.  If
\(D_{\rm band}(R_L)\le R_U\), \cref{lem:rg-band} with minimum retained gap cost
\(R_U\) gives \(D_{\rm band}(R_L)=D(R_L)\).  The returned value is therefore smaller than
\((7+8n)D\).
If \(D_{\rm band}(R_L)>R_U\), the same lemma implies \(D(R_L)>R_U\); otherwise
the banded DP would return \(D(R_L)\).  Hence \(R_U<D(R_L)<7D\), so the
stored successful value is smaller than \(14(1+n)D\) by
\cref{lem:rg-one-scale}.  Both outcomes satisfy the theorem's bound.

Sorting costs \(O(n\log n)\).  Bisection tests \(O(\log n)\) indices,
each with \(O(\log((n+1)/\delta_{\rm rg}))\) banded computations.
Each computation, including simplification and adding the removed
characters' gap costs, takes \(O(n)\) time.  This proves the running-time
bound.
\end{proof}
\section{Simplified distance}\label{sec:simplified}

A \(Y\)-window close in weighted edit distance to a short \(X\)-window
may still contain many inexpensive characters.  The standard DP would
have to process all of them.  We discard the matching savings of
sufficiently inexpensive characters while retaining their full gap costs.
For an \(X\)-window of length at most \(d\), this increases the minimum
alignment cost by only \(O(\eps t)\).  A count test rejects windows with
too many retained characters, so each comparison with an \(X\)-window
takes \(O(d^2/\eps)\) time.

Throughout, we fix a distance threshold \(t>0\) and an integer
\(d\ge1\).  We set
\(H=\eps t/d\).  For a string \(U\), let \(U^{>H}\) be the indexed
subsequence, in original order, of characters whose gap cost is strictly
greater than \(H\); we call these the retained characters.

\begin{definition}\label{def:simplified}
For strings \(U,V\), the \emph{simplified distance} at threshold \(t\) is
\[
 \SD_t(U,V)=M(U)+M(V)
 -\max_{\cM\subseteq U^{>H}\times V^{>H}}
                  \sum_{(a,b)\in\cM}w(a,b),
\]
where the maximum is over monotone one-to-one matchings.
\end{definition}
Since \(\ED\) is the total gap mass minus the best matching savings,
\[
 \SD_t(U,V)=\ED\bigl(U^{>H},V^{>H}\bigr)+\bigl(M(U)-M(U^{>H})\bigr)
            +\bigl(M(V)-M(V^{>H})\bigr).
\]
We compute the simplified distance by running the standard edit-distance
DP on the retained strings, then adding the costs of deleting the removed
characters of \(U\) and inserting the removed characters of \(V\).  The subsequences
\(U^{>H},V^{>H}\) are Kuszmaul's threshold simplifications~\cite{Kuszmaul}.
Prefix gap-mass arrays provide \(M(U)\) and \(M(V)\).  Subtracting the
retained characters' gap costs gives the removed cost without scanning
the entire substrings.

If \(|U|\le d\), an optimal matching has at most \(d\) pairs.  Discarding
pairs that involve a character of gap cost at most \(H\) loses at most
\(2H\) per pair.

\begin{lemma}\label{lem:simplified-short}
Let \(\cM\) be any monotone matching between \(U,V\) with at most \(d\)
pairs.  Then
\[
 \ED(U,V)\le\SD_t(U,V)
 \le\operatorname{cost}(\cM)+2\eps t.
\]
In particular, if \(|U|\le d\),
\[
 \ED(U,V)\le\SD_t(U,V)\le\ED(U,V)+2\eps t.
\]
The maximizing restricted matching induces an edit sequence whose cost is
the simplified distance.
\end{lemma}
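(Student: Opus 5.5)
The argument works directly with the matching formulation of \(\ED\) from the preliminaries. Both distances subtract a maximum matching weight from the same total mass \(M(U)+M(V)\); the only difference is the set of allowed matchings.

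\emph{Lower bound and realizability.} Every monotone matching \(\cM\subseteq U^{>H}\times V^{>H}\) is in particular a monotone matching between \(U\) and \(V\). Hence the maximum in \cref{def:simplified} is at most the unrestricted maximum, which gives \(\ED(U,V)\le\SD_t(U,V)\). For the final sentence, take a maximizing restricted matching \(\cM^\star\). By \cref{eq:matching-cost}, its induced edit sequence has cost \(M(U)+M(V)-\sum_{\cM^\star}w\), which is exactly \(\SD_t(U,V)\). Concretely, this sequence substitutes each matched pair, deletes the other characters of \(U\), and inserts the other characters of \(V\). It agrees with the displayed decomposition: the optimal alignment of \(U^{>H},V^{>H}\), followed by deleting and inserting the removed characters.

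\emph{Upper bound.} Let \(\cM\) have at most \(d\) pairs, and let \(\cM'\subseteq\cM\) keep only the pairs whose two characters both have gap cost \(>H\). Then \(\cM'\) is still monotone and one-to-one, and it lies in \(U^{>H}\times V^{>H}\). So \(\SD_t(U,V)\le\operatorname{cost}(\cM')\). Each discarded pair \((a,b)\) contains a character of gap cost at most \(H\). The bound \(0\le w(a,b)\le2\min\{g(a),g(b)\}\le 2H\) then shows that discarding it raises the cost by at most \(2H\). With at most \(d\) discarded pairs,
\[
\operatorname{cost}(\cM')\le\operatorname{cost}(\cM)+2dH=\operatorname{cost}(\cM)+2\eps t,
\]
using \(H=\eps t/d\).

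\emph{The special case.} When \(|U|\le d\), apply the general bound to an optimal monotone matching for \(\ED(U,V)\). Such a matching is one-to-one on \(U\), so it has at most \(|U|\le d\) pairs, and its cost equals \(\ED(U,V)\) by the matching characterization. No step is a real obstacle. The only points needing care are that restricting a monotone matching to a subset of its pairs keeps it monotone, and that the per-pair loss bound uses the smaller gap cost.
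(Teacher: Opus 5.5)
Your proof is correct and follows the paper's argument essentially step for step. The lower bound holds because restricted matchings form a subset of all monotone matchings, and a maximizing restricted matching realizes the simplified distance. The upper bound discards every pair with an endpoint of gap cost at most \(H\), losing at most \(2H\) per pair. The special case applies this to an optimal matching, which has at most \(|U|\le d\) pairs.
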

\begin{proof}
A maximizing matching induces an edit sequence whose cost is the simplified
distance, proving the lower bound.  From \(\cM\),
remove every pair with at least one endpoint of gap cost at most \(H\).
Since \(w(a,b)\le2\min\{g(a),g(b)\}\), each removed pair has matching weight
at most \(2H\).  There are at most \(d\) such pairs, so
\cref{eq:matching-cost} shows that removing them increases the induced edit
cost by at most \(2dH=2\eps t\).  The remaining matching is allowed in
\cref{def:simplified}, proving the upper bound.  For the bound with
\(|U|\le d\), use an optimal matching, which has at most \(|U|\) pairs.
\end{proof}

Thus, whenever \(|U|\le d\),
\[
 \ED(U,V)\le(1-2\eps)t\ \Longrightarrow\ \SD_t(U,V)\le t,
 \qquad
 \SD_t(U,V)\le t\ \Longrightarrow\ \ED(U,V)\le t.
\]

\subsection{Bounding retained characters}

At each threshold the algorithm stores prefix counts and ordered positions for
the retained characters of \(Y\), those with gap cost exceeding \(H\).
Before every comparison with a \(Y\)-window \(V\), it applies the
\emph{count test}, checking in constant time whether
\(\lvert V^{>H}\rvert\le d+\lceil d/\eps\rceil\).  The following lemma shows
that every \(Y\)-window within weighted edit distance \(t\) of an
\(X\)-window passes this test.  The test counts retained characters in \(V\), whose number
is distinct from the clipped mass used in \cref{sec:grid}.

\begin{lemma}\label{lem:retained-count}
Suppose a monotone matching between \(U,V\) has at most \(d\) pairs and
induced edit cost at most \(\kappa t\), for some \(\kappa\ge0\).  Then
\[
                  |V^{>H}|\le d+\frac{\kappa t}{H}
                            =d+\frac{\kappa d}{\eps}.
\]
In particular, the conclusion holds if \(|U|\le d\) and
\(\ED(U,V)\le\kappa t\).
\end{lemma}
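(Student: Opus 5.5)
The plan is to split the retained characters of \(V\) according to whether the matching pairs them, and to bound each group separately. Let \(\cM\) be the given monotone matching, with at most \(d\) pairs and \(\operatorname{cost}(\cM)\le\kappa t\). Each character of \(V\) lies in at most one pair, so at most \(d\) retained characters of \(V\) are matched by \(\cM\). This gives the additive term \(d\).

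For the unmatched retained characters of \(V\), I will use \cref{eq:matching-cost} and write the cost as a sum of nonnegative contributions. For each pair \((a,b)\in\cM\), the quantity \(g(a)+g(b)-w(a,b)=\rho(a,b)\) is nonnegative. Every unmatched character of \(U\) or \(V\) contributes its gap cost \(g\ge0\). Hence
\[
\operatorname{cost}(\cM)=\sum_{(a,b)\in\cM}\rho(a,b)+\sum_{\text{unmatched }c}g(c)\ge\sum_{\substack{c\in V^{>H}\\ c\text{ unmatched}}}g(c).
\]
Each term on the right exceeds \(H\). So if \(m\) is the number of unmatched retained characters of \(V\), then \(mH<\kappa t\) when \(m>0\), and in every case \(m\le\kappa t/H\).

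Adding the two bounds gives \(|V^{>H}|\le d+\kappa t/H\). Substituting \(H=\eps t/d\) turns \(\kappa t/H\) into \(\kappa d/\eps\).

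For the ``in particular'' clause, I take \(\cM\) to be the matching induced by an optimal alignment of \(U\) and \(V\). Its cost is \(\ED(U,V)\le\kappa t\), and it has at most \(|U|\le d\) pairs because each pair uses a distinct character of \(U\).

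I expect no real obstacle. The only point needing care is the decomposition of \(\operatorname{cost}(\cM)\) into \(\rho\)-costs of pairs plus gap costs of unmatched characters, which follows directly from the definition of \(w\).
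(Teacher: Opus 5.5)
Your proposal is correct and matches the paper's proof: at most \(d\) retained characters of \(V\) are matched, and each unmatched one is inserted at gap cost exceeding \(H\) within total cost at most \(\kappa t\). The ``in particular'' clause likewise uses an optimal matching with at most \(|U|\le d\) pairs; your explicit decomposition of \(\operatorname{cost}(\cM)\) into \(\rho\)-costs plus unmatched gap costs simply spells out the step the paper leaves implicit.
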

\begin{proof}
At most \(d\) characters of \(V\) are matched.  Every remaining character
of \(V^{>H}\) is inserted by the induced edit sequence and costs more than \(H\).
Their number is at most \(\kappa t/H\).  For the final statement, use an
optimal matching, which has at most \(|U|\le d\) pairs.
\end{proof}

When comparing an \(X\)-window with a \(Y\)-window that passes the count test,
the simplified-distance DP table has dimensions \(O(d)\) and \(O(d/\eps)\)
and can be computed in \(O(d^2/\eps)\) time.  The \(Y\)-window may contain
arbitrarily many characters with gap cost at most \(H\).  Their gap costs
are included in the simplified distance, but these characters do not
enter the DP table.

\subsection{Composing comparisons through an \texorpdfstring{\(X\)}{X}-window}

The center and the candidate \(Y\)-window in the dense phase may both be
long, but each is matched to the same short \(X\)-window.  Composing these
matchings produces a center-to-candidate matching with at most \(d\) pairs.

\begin{lemma}\label{lem:common-source}
Let \(\cM_{U,C}\) and \(\cM_{U,V}\) be monotone matchings from the same
string \(U\) into \(C\) and \(V\).  Pair \(c\in C\) with \(v\in V\) exactly
when they are matched to the same position in \(U\).  The resulting
matching \(\cM_{C,V}\) is monotone, has at most \(|U|\) pairs, and satisfies
\[
 \operatorname{cost}(\cM_{C,V})
 \le\operatorname{cost}(\cM_{U,C})+
     \operatorname{cost}(\cM_{U,V}).
\]
\end{lemma}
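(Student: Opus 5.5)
Monotonicity and the pair count are immediate, so the work is the cost inequality. Each position \(u\) of \(U\) is matched to at most one character of \(C\) and at most one of \(V\). A pair \((c,v)\) of \(\cM_{C,V}\) therefore determines a unique \(u\), and distinct pairs use distinct \(u\), giving at most \(|U|\) pairs. If \((c,v)\) and \((c',v')\) come from \(u<u'\), monotonicity of the two given matchings gives \(c<c'\) and \(v<v'\), so \(\cM_{C,V}\) is order-preserving and one-to-one.

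For the cost, I will use the matching-cost formula \cref{eq:matching-cost}. The inequality to prove is
\[
 M(C)+M(V)-\sum_{\cM_{C,V}}w(c,v)\le M(C)+M(V)+2M(U)-\sum_{\cM_{U,C}}w(u,c)-\sum_{\cM_{U,V}}w(u,v).
\]
After cancelling, this becomes
\[
 \sum_{\cM_{U,C}}w(u,c)+\sum_{\cM_{U,V}}w(u,v)\le 2M(U)+\sum_{\cM_{C,V}}w(c,v).
\]
I will prove it by charging position by position in \(U\), splitting each \(u\) into three cases.

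If \(u\) is matched in neither matching, it contributes \(0\) on the left and \(2g(u)\ge0\) on the right. If \(u\) is matched in exactly one, say to \(c\), the left contribution is \(w(u,c)\le2g(u)\) by \(w\le2\min\{g(u),g(c)\}\). If \(u\) is matched to both \(c\) and \(v\), then \((c,v)\in\cM_{C,V}\), and I need
\[
 w(u,c)+w(u,v)\le2g(u)+w(c,v).
\]
Expanding \(w\), this is
\[
 g(u)+g(c)-\rho(u,c)+g(u)+g(v)-\rho(u,v)\le2g(u)+g(c)+g(v)-\rho(c,v),
\]
which reduces to \(\rho(c,v)\le\rho(c,u)+\rho(u,v)\), the triangle inequality. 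Pairs of \(\cM_{C,V}\) correspond bijectively to the positions in the third case, so summing the three cases over all \(u\) proves the inequality.

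The only step needing care is this bookkeeping: every term \(w(c,v)\) must be charged to exactly one \(u\), and every \(u\) must have its own \(2g(u)\) budget. The bijection between pairs of \(\cM_{C,V}\) and the doubly matched positions of \(U\) guarantees both.
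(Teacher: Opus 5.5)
Your proof is correct and follows the paper's argument essentially step for step. You use the matching-cost formula to reduce the inequality to bounding the two given matchings' savings by \(2M(U)\) plus the composed matching's savings, then charge position by position in \(U\), with the triangle inequality \(\rho(c,v)\le\rho(c,u)+\rho(u,v)\) handling the doubly matched positions, which correspond bijectively to the pairs of \(\cM_{C,V}\).
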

\begin{proof}
Monotonicity follows because both original matchings preserve the order of
positions in \(U\).  Moreover, every pair in \(\cM_{C,V}\) is indexed
by a distinct position in \(U\), so \(\cM_{C,V}\) has at most \(|U|\)
pairs.

By \cref{eq:matching-cost}, the desired inequality is equivalent to
\[
 \sum_{(u,c)\in\cM_{U,C}} w(u,c)
 +\sum_{(u,v)\in\cM_{U,V}} w(u,v)
 -\sum_{(c,v)\in\cM_{C,V}} w(c,v)
 \le 2M(U).
\]
Charge the left-hand side to positions in \(U\).  Consider a position with
character \(u\).  If it is unmatched in both original matchings, its
contribution is zero.  If it is matched in exactly one of them, its
contribution is the matching weight of that pair and is at most \(2g(u)\).
If \(u\) is matched to \(c\) in the first
matching and to \(v\) in the second, its contribution is
\[
 \begin{aligned}
 w(u,c)+w(u,v)-w(c,v)
   &=2g(u)-\rho(u,c)-\rho(u,v)+\rho(c,v)\\
   &\le 2g(u),
 \end{aligned}
\]
where the inequality follows from
\(\rho(c,v)\le\rho(c,u)+\rho(u,v)\) and symmetry of \(\rho\).
Summing over all positions in \(U\) gives a bound of \(2M(U)\), proving
the claim.
\end{proof}

\Needspace{8\baselineskip}
\begin{corollary}\label{cor:dense-composition}
Let \(|U|\le d\) and \(D_V=\ED(U,V)\).  Assume that
\(\SD_t(U,C)\le t\) and \(D_V\le t\).  Then
\[
 \SD_t(C,V)\le \SD_t(U,C)+D_V+2\eps t.
\]
The combined box label is the cost of a \(U\)-to-\(V\) edit sequence
and satisfies
\[
               \SD_t(U,C)+\SD_t(C,V)\le D_V+(2+2\eps)t.
\]
\end{corollary}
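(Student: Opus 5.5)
The plan is to build an explicit center-to-candidate matching, with few pairs, by composing through \(U\) as in \cref{lem:common-source}, and then to apply the first inequality of \cref{lem:simplified-short}, which covers any matching with at most \(d\) pairs.

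First, take \(\cM_{U,C}\) to be the maximizing restricted matching in \cref{def:simplified} for the pair \((U,C)\). It is a monotone matching between \(U\) and \(C\), and its induced cost is exactly \(\SD_t(U,C)\); this is the last sentence of \cref{lem:simplified-short}, and it follows from the matching-cost formula \cref{eq:matching-cost}. Take \(\cM_{U,V}\) to be an optimal matching realizing \(\ED(U,V)=D_V\).

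Next, compose the two matchings as in \cref{lem:common-source}. The result \(\cM_{C,V}\) is monotone, has at most \(|U|\le d\) pairs, and satisfies
\[
\operatorname{cost}(\cM_{C,V})\le\SD_t(U,C)+D_V.
\]
Because \(\cM_{C,V}\) has at most \(d\) pairs, \cref{lem:simplified-short} applies with \(C\) in the role of the first string, even though \(|C|\) may be large; that lemma only requires a bound on the number of pairs. It gives
\[
\SD_t(C,V)\le\operatorname{cost}(\cM_{C,V})+2\eps t,
\]
which is the first claimed inequality. The hypotheses \(\SD_t(U,C)\le t\) and \(D_V\le t\) are not needed for this step. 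They enter only in the numeric bound below.

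For the second claim, the label \(\SD_t(U,C)+\SD_t(C,V)\) is the cost of an edit sequence from \(U\) to \(V\). To see this, follow the simplified \(U\)-to-\(C\) edit sequence and then the simplified \(C\)-to-\(V\) edit sequence. Each has cost equal to its simplified distance by \cref{lem:simplified-short}, and \(\ED\) satisfies the triangle inequality, so the label is at least \(\ED(U,V)\). Adding the first inequality gives
\[
\SD_t(U,C)+\SD_t(C,V)\le 2\,\SD_t(U,C)+D_V+2\eps t\le D_V+(2+2\eps)t,
\]
where the last step uses \(\SD_t(U,C)\le t\).

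The main point to check is that \cref{lem:simplified-short} really only needs a bound on the number of pairs and not a bound on \(|C|\). Its proof removes the pairs that touch a character of gap cost at most \(H\), losing at most \(2H\) per removed pair, and there are at most \(d\) such pairs. Nothing in this argument uses the length of either string, so the lemma applies as used.
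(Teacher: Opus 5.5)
Your proposal is correct and follows the paper's proof. The paper likewise composes a matching realizing \(\SD_t(U,C)\) with an optimal \(U\)-to-\(V\) matching via \cref{lem:common-source}, applies the pair-count form of \cref{lem:simplified-short} to \((C,V)\), and then uses \(\SD_t(U,C)\le t\) and concatenates the two edit sequences. Your side remark is also right: the hypothesis \(D_V\le t\) is never used.
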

\begin{proof}
\setlength{\emergencystretch}{1em}
Use a matching realizing \(\SD_t(U,C)\) and an optimal
\(U\)-to-\(V\) matching in \cref{lem:common-source}.  Their composition
has at most \(d\) pairs and cost at most \(\SD_t(U,C)+D_V\).  Apply
\cref{lem:simplified-short}; then use \(\SD_t(U,C)\le t\) once more.  The edit sequences
realizing \(\SD_t(U,C)\) and \(\SD_t(C,V)\)
concatenate, proving the feasibility statement.
\end{proof}

The dense box label includes \(\SD_t(U,C)\) for the comparison with the center, and
the bound on the center-to-candidate comparison contains another copy of
\(\SD_t(U,C)\). This explains the overhead of about \(2t\) in the
label bound of \cref{cor:dense-composition}.

\paragraph{Size of the center-to-candidate computation.}
Before a center-to-candidate comparison, the algorithm applies the same count
test to the candidate \(V\); the center \(C\) passed it when it was
selected.  Both retained strings have length \(O(d/\eps)\), so every
comparison takes \(O(d^2/\eps^2)\) time.
\section{Candidate \texorpdfstring{\(Y\)}{Y}-windows}\label{sec:grid}

Candidate \(Y\)-windows for an \(X\)-window must have starts in a band of
clipped prefix mass and clipped masses close to that of the \(X\)-window.
The guess \(\Delta>0\) controls the start band and the threshold \(t>0\)
controls the allowed difference in clipped mass.  We set the mesh
\(s=\eps t/64\) so that rounding removes only \(O(\eps t)\) gap mass.

\subsection{Clipped grids}

A naive clipped-mass grid need not include both cuts adjacent to a
character of \(Y\) whose clipped gap cost exceeds the mesh, so an endpoint
snap could remove characters of arbitrarily large unclipped gap cost.
Our grid isolates every such character before greedily grouping the remaining ones.

We call a character \(Y[j]\) \emph{\(s\)-heavy} when \(g_t(Y[j])>s\), and
\emph{\(s\)-light} otherwise.  We mark both adjacent cuts of every \(s\)-heavy
character, together with \(0\) and \(|Y|\).  Within each maximal remaining run, all clipped
gap costs equal their corresponding gap costs and are at most \(s\)
because \(g_t(Y[j])\le s<t\) implies \(g_t(Y[j])=g(Y[j])\).  We scan each such run from left
to right, marking the first cut at which the accumulated mass reaches
\(s\), restarting there, and marking the run's final cut.  Let \(G_t\) be
the sorted set of marked cuts.  The construction takes one scan through
\(Y\).

Any two consecutive grid cells have total clipped mass at least \(s\).
This bounds the number of cuts in an interval of clipped prefix values.

\begin{lemma}\label{lem:grid-density}
For every interval \(I\subseteq\R\), writing \(|I|\) for its length,
\[
 \bigl|\{z\in G_t:M_{t,Y}(z)\in I\}\bigr|
 \le3+\frac{2|I|}{s}=O\!\left(1+\frac{|I|}{\eps t}\right).
\]
In particular \(|G_t|=O(n)\).
\end{lemma}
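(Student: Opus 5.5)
Let \(z_1<z_2<\cdots<z_m\) be the grid cuts whose clipped prefix values lie in \(I\). Since \(M_{t,Y}\) is nondecreasing in the cut, the values \(M_{t,Y}(z_1)\le\cdots\le M_{t,Y}(z_m)\) all lie in \(I\). Hence
\[
M_{t,Y}(z_m)-M_{t,Y}(z_1)\le|I|.
\]
The plan is to show that every two consecutive grid cells together carry clipped mass at least \(s\). Granting this, \(M_{t,Y}(z_{j+2})-M_{t,Y}(z_j)\ge s\) for every \(j\). Chaining these inequalities over \(j=1,3,5,\ldots\) gives \(\lfloor (m-1)/2\rfloor\, s\le |I|\). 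Therefore \(m\le 3+2|I|/s\), and substituting \(s=\eps t/64\) gives the \(O(1+|I|/(\eps t))\) form.

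The main step is the two-cell claim. First I classify the cells \([z,z')\) between consecutive marked cuts. Such a cell is either a single \(s\)-heavy character, or it lies inside a maximal run of \(s\)-light characters. A single heavy character has clipped mass greater than \(s\). A cell inside a light run is either a greedy piece that ended when its accumulated mass first reached \(s\), and so has mass at least \(s\), or it is the final piece of its run. The final piece may have mass below \(s\), but it ends at the run's final cut. That cut is either \(|Y|\) or the left cut of an \(s\)-heavy character. So the cell following a short final piece is a heavy singleton of mass greater than \(s\), unless the short piece is the last cell of the grid. Consequently every pair of consecutive cells contains at least one cell of mass at least \(s\). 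Pairs touching the final cut at \(|Y|\) are the only exception, and the additive constant \(3\) absorbs it.

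I expect the main obstacle to be this bookkeeping: getting the additive constant right, and checking that a short piece cannot be followed by another short piece. Two consecutive short pieces would need two runs separated by nothing, which is impossible because maximal light runs are separated by at least one heavy character.

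The bound \(|G_t|=O(n)\) needs no density argument. The grid contains only cuts of \(Y\), so \(|G_t|\le|Y|+1\le n+1\). Alternatively, take \(I\) to be the full range of clipped prefix values, although the trivial count is simpler.
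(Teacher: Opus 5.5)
Your proof is correct and follows the paper's argument. You show that no two cells of clipped mass below \(s\) are adjacent, because a short cell can only be the unfinished final piece of a light run and is followed by a heavy singleton. You then pair consecutive cells among the consecutive cuts \(z_1<\cdots<z_m\) to get \(\lfloor(m-1)/2\rfloor s\le|I|\), and bound \(|G_t|\le|Y|+1\) directly.

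Two minor remarks. The exception you set aside for pairs touching \(|Y|\) is vacuous, since the first cell of any consecutive pair is never the last cell. You also use implicitly that the \(z_j\) are consecutive in \(G_t\), which holds because \(I\) is an interval and \(M_{t,Y}\) is monotone.
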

\begin{proof}
A grid cell is the substring between two consecutive cuts of \(G_t\).
Every greedy cell whose construction reaches mass \(s\) has clipped mass
in \([s,2s)\).  Immediately before its last character is added, the
accumulated mass is below \(s\), and that character contributes at most
\(s\).  A cell consisting of an isolated \(s\)-heavy character has mass
greater than \(s\).  Thus a cell of mass below \(s\) can occur only as the
final, unfinished cell of a run of \(s\)-light characters.  Each run has
at most one such cell, and an isolated \(s\)-heavy character separates
distinct runs.  In particular, two cells of mass below \(s\) cannot be
adjacent.  Every pair of consecutive grid cells therefore has total clipped
mass at least \(s\).

Let \(z_1<\cdots<z_m\) be all cuts in \(G_t\) whose clipped prefix values
lie in \(I\).  The claim is immediate if \(m\le1\), so assume \(m\ge2\).
Clipped prefix mass is strictly increasing, and \(I\) is an interval, so
these are consecutive cuts of \(G_t\).  The \(m-1\) cells between them
have total clipped mass
\[
                 M_{t,Y}(z_m)-M_{t,Y}(z_1)\le |I|.
\]
Pairing consecutive cells gives
\[
                 \left\lfloor\frac{m-1}{2}\right\rfloor s\le |I|.
\]
Therefore \(m\le3+2|I|/s\).

Finally, \(G_t\) is a set of cuts of \(Y\), so \(|G_t|\le |Y|+1=O(n)\).
\end{proof}

\paragraph{Inward snap.}
For an arbitrary cut \(z\) of \(Y\), let \(z^+\) be the least element of
\(G_t\) that is at least \(z\), and let \(z^-\) be the greatest element of
\(G_t\) that is at most \(z\).
If \(z\notin G_t\), it lies inside one greedy cell consisting only of
\(s\)-light characters.  The clipped and unclipped gap masses of this cell
coincide and are below \(2s\).  Thus moving a left endpoint to \(z^+\) or a
right endpoint to \(z^-\) never crosses an \(s\)-heavy character and deletes
characters of total gap cost less than \(2s\).

For a \(Y\)-window \(W=Y[u\mathinner{.\,.}v)\), we define its inward snap by
\[
 \operatorname{snap}_t(W)=
 \begin{cases}
 Y[u^+\mathinner{.\,.}v^-),&u^+<v^-,\\
 Y[v^-\mathinner{.\,.}v^-),&u^+\ge v^-.
 \end{cases}
\]
If \(u^+<v^-\), then \(u\le u^+<v^-\le v\), so the nonempty snap is a
subwindow of \(W\).  The total unclipped gap mass removed from its two ends
is less than \(4s=\eps t/16\).

When \(u^+\ge v^-\), the snap is empty and is anchored at the grid cut \(v^-\).
There are two cases.  If \(u^+=v^-=a\), then
\(u\le a\le v\), so the anchor lies inside the original window.  The two
endpoint portions \(Y[u\mathinner{.\,.}a)\) and \(Y[a\mathinner{.\,.}v)\) each have gap mass below
\(2s\), and together they cover \(W\).  Thus \(M(W)<4s\).

If \(u^+>v^-\), there is no grid cut in \([u,v]\).  Any such cut \(z\)
would satisfy \(u^+\le z\le v^-\).  Consequently, \(v^-\) and \(u^+\)
are consecutive grid cuts and
\[
                         v^-<u\le v<u^+.
\]
Both endpoints lie strictly inside one greedy cell of \(s\)-light
characters, so \(M(W)<2s\).  In this case the chosen anchor \(v^-\)
lies strictly to the left of \(u\).  Thus an empty snap need not have its
anchor inside the original window.  In either empty case we have
\[
                             M(W)<4s.
\]

At one threshold, the map \(z\mapsto z^-\) is nondecreasing.  Therefore
the anchors \(v^-\) of empty snaps preserve the order of the original
windows, including for consecutive empty snaps.  Across thresholds,
the maps use different grids, so the empty anchors need not be ordered.
The analysis in \cref{sec:analysis} combines boxes from different
thresholds, so it omits empty canonical windows from the chain and
accounts for them by deleting their \(X\)-windows and inserting their
optimal \(Y\)-windows.  Nonempty canonical windows lie inside their
optimal \(Y\)-windows and hence remain ordered across thresholds.

\subsection{Tag families}

We need a common family of candidate \(Y\)-windows from which the dense
phase can sample.  We group \(X\)-windows into bins according to their
clipped mass.  A \emph{tag} records two cuts of \(Y\) together with a mass
bin.  The cuts identify the \(Y\)-window, and the bin specifies the class of
\(X\)-windows for which that \(Y\)-window was generated.  Tags from different bins remain
distinct even when they have the same cuts.  Each \(X\)-window considers only
tags from its own bin.

Let \(p_i\) be the start coordinate of \(x_i\) in \(X\), so that
\(M_{t,X}(p_i)\) is the clipped prefix mass before \(x_i\), and let
\(m_i=M_t(x_i)\) be the clipped mass of \(x_i\).  We assign \(x_i\) to the
bin \(b_i=\lfloor m_i/s\rfloor\).
The clipped mass of an \(X\)-window in bin \(b\) lies in
\([bs,(b+1)s)\).  For a candidate \(Y\)-window we allow an extra \(2t\)
on each side of this interval to account for the cost of the optimal
subpath and the inward-snap displacement.  For each occupied bin \(b\)
and start cut \(a\in G_t\) we set
\[
 E_b(a)=\left\{e\in G_t:
 \begin{array}{l}
 e\ge a,\\
 \max\{0,bs-2t\}\le M_{t,Y}(e)-M_{t,Y}(a)\le(b+1)s+2t
 \end{array}
 \right\}\cup\{a\}.
\]
Including the endpoint \(e=a\) provides a tag representing the positioned
empty window.  These endpoints define the global tag family
\[
\cY_t^{\rm tag}
   =\{(b,a,e):b\text{ occupied},\ a\in G_t,\ e\in E_b(a)\}.
\]
For a tag \(\tau=(b_\tau,a_\tau,e_\tau)\in\cY_t^{\rm tag}\) we define its represented
window by \(\operatorname{win}(\tau):=Y[a_\tau\mathinner{.\,.}e_\tau)\).

For each window \(x_i\) we restrict the global family to its bin and to starts
whose clipped prefix values lie in a band around \(M_{t,X}(p_i)\).  For the fixed
positive guess \(\Delta\) its candidate family is
\begin{equation}\label{eq:window-family}
 \cY_t(i)=\{(b_i,a,e):
       a\in G_t,\ |M_{t,Y}(a)-M_{t,X}(p_i)|\le\Delta+2s,\ e\in E_{b_i}(a)\}.
\end{equation}
Membership identifies candidates for comparison.  Acceptance is determined by
the count test and the simplified-distance test in the next subsection.

The global family determines the dense sample size, while each individual
family bounds the candidates considered for one \(X\)-window.  We also
bound the union of candidate families for windows that can share a center.
This lets those windows share an enumeration of candidate tags.

\begin{lemma}\label{lem:family-sizes}
The global family satisfies
\[
                         |\cY_t^{\rm tag}|=O(nd/\eps^2).
\]
For every window \(x_i\),
\[
              |\cY_t(i)|=O\!\left(\frac{1+\Delta/t}{\eps^2}\right).
\]
For every fixed tag \(c=(b,a_c,e_c)\in\cY_t^{\rm tag}\),
\[
 \left|\bigcup_{i:\,c\in\cY_t(i)}\cY_t(i)\right|
       =O\!\left(\frac{1+\Delta/t}{\eps^2}\right).
\]
\end{lemma}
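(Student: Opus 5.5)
All three bounds reduce to \cref{lem:grid-density}: each family is a set of pairs \((a,e)\) of grid cuts whose clipped prefix values lie in an interval of controlled length, plus a bin label.

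\emph{Per-start end sets.} Fix an occupied bin \(b\) and a start \(a\in G_t\). Every \(e\in E_b(a)\setminus\{a\}\) has \(M_{t,Y}(e)\) in an interval of length at most \(s+4t\). By \cref{lem:grid-density},
\[
|E_b(a)|\le 1+3+\frac{2(s+4t)}{s}=O(t/s)=O(1/\eps).
\]
Because \(M_{t,Y}\) is nondecreasing, the condition \(e\ge a\) is automatically an interval condition, so no extra care is needed.

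\emph{Global family.} Sum over \(a\in G_t\), which has \(O(n)\) cuts, and over occupied bins. Every \(X\)-window has length at most \(d\), so \(m_i\le dt\) and \(b_i\le dt/s=64d/\eps\). Hence there are \(O(d/\eps)\) occupied bins (at most \(\min\{q,64d/\eps+1\}\)), and
\[
|\cY_t^{\rm tag}|\le O(n)\cdot O(d/\eps)\cdot O(1/\eps)=O(nd/\eps^2).
\]

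\emph{Single window.} For \(x_i\) the bin is fixed. The allowed starts have \(M_{t,Y}(a)\) in an interval of length \(2\Delta+4s\), so \cref{lem:grid-density} gives \(O(1+\Delta/(\eps t))\) starts. Multiplying by \(O(1/\eps)\) ends per start gives
\[
|\cY_t(i)|=O\!\left(\frac{1+\Delta/t}{\eps^2}\right),
\]
since \(1/\eps\ge1\).

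\emph{Union over windows sharing a tag.} This is the step needing an argument beyond counting. If \(c=(b,a_c,e_c)\in\cY_t(i)\), then two things follow.
\begin{itemize}
\item The bin matches: \(b_i=b\). So every window in the union uses the same bin \(b\), and all its tags are \((b,a,e)\) with \(e\in E_b(a)\).
\item The window's start band contains \(a_c\): \(|M_{t,Y}(a_c)-M_{t,X}(p_i)|\le\Delta+2s\). By the triangle inequality, any start \(a\) admitted for \(x_i\) satisfies
\[
|M_{t,Y}(a)-M_{t,Y}(a_c)|\le2\Delta+4s.
\]
\end{itemize}
So the union's starts lie among the grid cuts whose clipped prefix values fall in an interval of length \(4\Delta+8s\) centered at \(M_{t,Y}(a_c)\). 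By \cref{lem:grid-density} there are \(O(1+\Delta/(\eps t))\) such starts, each with \(O(1/\eps)\) ends in bin \(b\). This gives the same bound as for a single window.

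The only subtle point is this union bound: one must observe that the bin is shared and that the start bands overlap at \(a_c\). Doubling the band width then suffices.
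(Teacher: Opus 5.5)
Your proof is correct and follows the paper's argument essentially step for step. You bound each end set $E_b(a)$ by grid density over an interval of width at most $4t+s$, multiply by the $O(n)$ starts and $O(d/\eps)$ occupied bins, and bound a single window's starts via its prefix band of width $2\Delta+4s$. For the union, you use the shared bin together with the triangle inequality through $M_{t,Y}(a_c)$ to place all starts in one interval of width $4\Delta+8s$.
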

\begin{proof}
The mass inequalities defining \(E_b(a)\) restrict \(M_{t,Y}(e)\) to
an interval of width at most \(4t+s\).  By \cref{lem:grid-density} there
are \(O(1/\eps)\) endpoints for each bin and start even after adding
\(e=a\).  The bound \(0<m_i\le dt\) gives \(O(d/\eps)\) occupied bins.
Combining these bounds with the \(O(n)\) starts proves the global bound.
The prefix band in \cref{eq:window-family} has width \(2\Delta+4s\).
Grid density gives \(O(1+\Delta/(\eps t))\) starts in this band.
Multiplying by the endpoint bound proves the bound for each window \(x_i\).

If \(c\in\cY_t(i)\), then \(b_i=b\) and the prefix band of \(x_i\)
contains \(M_{t,Y}(a_c)\).  Thus \(|M_{t,Y}(a_c)-M_{t,X}(p_i)|\le\Delta+2s\).
For any start \(a\) of a tag in \(\cY_t(i)\),
\[
 |M_{t,Y}(a)-M_{t,Y}(a_c)|\le2\Delta+4s.
\]
Thus the start cuts appearing in the union lie in one interval of clipped
prefix values of width \(4\Delta+8s\), and each has \(O(1/\eps)\) ends under the
same bin.
This proves the bound for the union.
\end{proof}

We construct the global family once per threshold and index it in
lexicographic order.  We store \(M_{t,Y}(e)\) for all grid cuts \(e\) in
a sorted array.  For a fixed bin both bounds on \(M_{t,Y}(e)\) increase
as the start \(a\) advances through \(G_t\).  Two pointers locate these
bounds and each traverses the array at most once.  There are
\(O(1/\eps)\) endpoints to output per start including the empty endpoint.
Thus enumeration takes \(O(n/\eps)\) time per bin and \(O(nd/\eps^2)\)
time over all occupied bins.

\subsection{Canonical windows}

Fix \(\Delta\ge D\).  Snapping the fixed optimal window
\(y_i=Y[z_{i-1}^\star\mathinner{.\,.}z_i^\star)\) gives the \emph{canonical
window} at threshold \(t\).  Its \emph{canonical tag} \(\mu_t(i)=(b_i,a,e)\) is defined by
\[
       \operatorname{win}(\mu_t(i))=Y[a\mathinner{.\,.}e)=\operatorname{snap}_t(y_i).
\]
The canonical tag is \emph{present} when it belongs to \(\cY_t(i)\).
A pair \((i,\tau)\) is \emph{accepted} when the count test passes and
\(\SD_t(x_i,\operatorname{win}(\tau))\le t\).

Let \(\delta_{i,t}\) be the unclipped gap mass of the endpoint portions removed
in a nonempty snap or all of \(M(y_i)\) in an empty snap.  Then
\begin{equation}\label{eq:canonical-removal}
      \delta_{i,t}<4s=O(\eps t).
\end{equation}

Set \(\gamma=1-17\eps/8\).  The margin \((1-\gamma)t\) covers the
snapping loss \(4s\) and the simplification error \(2\eps t\).

\begin{lemma}\label{lem:canonical-present}
If \(D_i\le t\), the canonical tag \(\mu_t(i)\) belongs to \(\cY_t(i)\).
If \(D_i\le\gamma t\), the pair \((i,\mu_t(i))\) is accepted and satisfies
\[
       \SD_t(x_i,\operatorname{snap}_t(y_i))\le t.
\]
\end{lemma}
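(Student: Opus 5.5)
The plan is to treat the two claims separately: membership reduces to clipped-mass bookkeeping via \cref{lem:clipped-potential}, and acceptance reduces to a distance bound followed by \cref{lem:simplified-short,lem:retained-count}. Write \(u=z_{i-1}^\star\), \(v=z_i^\star\), so \(y_i=Y[u\mathinner{.\,.}v)\), and let \(\mu_t(i)=(b_i,a,e)\) with \(Y[a\mathinner{.\,.}e)=\operatorname{snap}_t(y_i)\). By construction \(a,e\in G_t\) and \(a\le e\). It remains to check two conditions from \cref{eq:window-family}: the start band \(|M_{t,Y}(a)-M_{t,X}(p_i)|\le\Delta+2s\), and \(e\in E_{b_i}(a)\).

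\emph{Start band.} The vertex \((p_i,u)\) lies on \(\pi\): it is the chosen first vertex with \(X\)-coordinate \(p_i\), or the source when \(i=1\). The prefix form of \cref{lem:clipped-potential} gives \(|M_{t,Y}(u)-M_{t,X}(p_i)|\le D\le\Delta\). I then show \(|M_{t,Y}(a)-M_{t,Y}(u)|<2s\) in each snap case.
\begin{itemize}
\item Nonempty snap, and the empty case \(u^+=v^-\): here \(a=u^+\), and \(Y[u\mathinner{.\,.}u^+)\) lies inside one greedy cell of \(s\)-light characters, so its mass is below \(2s\).
\item Empty case \(v^-<u\): here \(a=v^-\), and \(v^-\) and \(u\) lie in the same cell, so \(M(Y[v^-\mathinner{.\,.}u))<2s\).
\end{itemize}
Clipped mass never exceeds unclipped mass, so in every case the start band holds with the extra \(2s\).

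\emph{Endpoint.} An empty snap has \(e=a\), which is in \(E_{b_i}(a)\) by definition. For a nonempty snap, the subpath form of \cref{lem:clipped-potential} applied to \(\pi_i\) gives \(|m_i-M_t(y_i)|\le D_i\le t\). Snapping removes clipped mass at most \(\delta_{i,t}<4s\) by \cref{eq:canonical-removal}. Since \(m_i\in[b_is,(b_i+1)s)\), this yields
\[
M_t(\operatorname{snap}_t(y_i))\in\bigl(b_is-t-4s,\;(b_i+1)s+t\bigr].
\]
Because \(4s<t\), this interval lies within \([\max\{0,b_is-2t\},(b_i+1)s+2t]\); nonnegativity is automatic. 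This proves the first claim.

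\emph{Acceptance.} Assume \(D_i\le\gamma t\). The snap is obtained from \(y_i\) by deleting characters of total gap mass \(\delta_{i,t}\), so the metric property of \(\ED\) gives
\[
\ED(x_i,\operatorname{snap}_t(y_i))\le D_i+\delta_{i,t}<\gamma t+\eps t/16 .
\]
This equals \(t-\tfrac{17}{8}\eps t+\tfrac1{16}\eps t\le(1-2\eps)t\). The implication after \cref{lem:simplified-short} (with \(|x_i|\le d\)) then gives \(\SD_t(x_i,\operatorname{snap}_t(y_i))\le t\). For the count test, apply \cref{lem:retained-count} with \(\kappa=1\): \(|V^{>H}|\le d+d/\eps\le d+\lceil d/\eps\rceil\).

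The only delicate step is the empty-snap case \(v^-<u\), where the anchor lies outside \(y_i\). There I must check that the anchor still stays within \(2s\) of \(u\) in clipped prefix mass, which holds because both cuts lie in one light cell.
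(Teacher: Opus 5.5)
Your proposal is correct and follows essentially the same route as the paper. For presence, you use the prefix and subpath forms of \cref{lem:clipped-potential} together with the \(<2s\) snap displacement. For acceptance, you use the triangle inequality through \(y_i\) followed by \cref{lem:simplified-short,lem:retained-count}. The differences are cosmetic: you dispatch empty snaps via the explicit endpoint \(e=a\) in \(E_{b_i}(a)\) rather than the uniform bound \(D_i+4s\le 2t\), and you route acceptance through the \((1-2\eps)t\) implication instead of bounding \(D_i+\delta_{i,t}+2\eps t<t\) directly.
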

\begin{proof}
\emph{Presence.}
At the start of the optimal subpath for \(x_i\),
\(|M_{t,Y}(z_{i-1}^\star)-M_{t,X}(p_i)|\le D\le\Delta\) by
\cref{lem:clipped-potential}.  The inward-snap bounds give
\(|M_{t,Y}(a)-M_{t,Y}(z_{i-1}^\star)|<2s\).  This also holds when an empty snap's
anchor lies to the left of \(z_{i-1}^\star\).  The anchor and the original start
belong to the same light cell, whose mass is below \(2s\).  Thus
\(|M_{t,Y}(a)-M_{t,X}(p_i)|<\Delta+2s\), which gives the prefix-band condition in
\cref{eq:window-family}.

The local potential bound gives
\(|m_i-M_t(y_i)|\le D_i\), and inward snapping removes clipped mass below
\(4s\).  The snapped window has endpoints \(a,e\in G_t\) with \(a\le e\), and its
clipped mass differs from \(m_i\) by at most \(D_i+4s\le2t\).
Since \(m_i\in[b_is,(b_i+1)s)\), we obtain
\[
 \max\{0,b_is-2t\}\le M_{t,Y}(e)-M_{t,Y}(a)\le(b_i+1)s+2t.
\]
Thus \(e\in E_{b_i}(a)\).

\emph{Acceptance.}
Suppose \(D_i\le\gamma t\).  Deleting from \(y_i\) the characters removed
by snapping costs \(\delta_{i,t}\), so the triangle inequality gives
\(\ED(x_i,\operatorname{snap}_t(y_i))\le D_i+\delta_{i,t}\).
Applying \cref{lem:simplified-short} and the choice of \(\gamma\) gives
\[
 \SD_t(x_i,\operatorname{snap}_t(y_i))
 \le D_i+\delta_{i,t}+2\eps t<t.
\]
The weighted edit distance is no larger, so \cref{lem:retained-count}
also ensures that the count test passes.
\end{proof}
\section{Dense and sparse phases}\label{sec:threshold}

At each threshold, a shared sample of candidate \(Y\)-windows supplies
centers for the dense phase.  The sparse phase then searches for boxes
for the \(X\)-windows left without a center, using their order to restrict
the search.

Fix a guess \(\Delta>0\), and let \(\gamma\) be the acceptance margin
from \cref{lem:canonical-present}.  We use the threshold sequence
\begin{equation}\label{eq:threshold-ladder}
 t_0=\frac{\eps\Delta}{64q},\qquad
 t_{\ell+1}=(1+\eps)t_\ell,
\end{equation}
through the first index \(L\) with \(t_L\ge\Delta/\gamma\), and call \(\ell\)
the \emph{level} of \(t_\ell\).  This gives
\(L+1=O(\log(q/\eps)/\eps)\), independently of \(\Delta\).

We let \(k\) be an integer parameter with \(1\le k\le n\), set in
\cref{eq:final-parameters}.  An \(X\)-window \(x_i\) is
\emph{\(t\)-dense} if its complete candidate family \(\cY_t(i)\) contains at
least \(k\) tags accepted for \(x_i\), and \emph{\(t\)-sparse} otherwise.

For the analysis, an \(X\)-window \(x_i\) is \emph{good} at threshold
\(t\) if \(D_i\le\gamma t\), and \emph{bad} otherwise.  If
\(\Delta\ge D\), \cref{lem:canonical-present} shows that the canonical
tag \(\mu_t(i)\) is accepted for every good window \(x_i\).

With the constant \(C\) from \cref{sec:preliminaries}, we set
\[
 B_{\rm loc}=C(L+1),\qquad
 \beta=\frac{\gamma}{C(L+1)},\qquad
 B_{\rm sam}=C\log(n/\eps).
\]
Scaling \(B_{\rm loc}\) with \(L+1\) limits the charge from expensive
subpaths to \(O(D/(L+1))\) at each threshold.  The reciprocal choice of
\(\beta\) gives the same bound for sampling failures.  Summing over all
thresholds then gives \(O(D)\) total charge.  The factor \(B_{\rm sam}\)
determines the sample sizes used in the simultaneous sampling guarantee.

\subsection{Dense phase}

A \(t\)-dense \(X\)-window has at least \(k\) accepted tags among the
\(|\cY_t^{\rm tag}|\) global tags.  A uniform sample whose size is
\(|\cY_t^{\rm tag}|/k\) times a sufficiently large logarithmic factor
contains one of these tags with high probability.  We use such a tag
as the window's center.

If \(|\cY_t^{\rm tag}|<k\), no \(X\)-window is
\(t\)-dense, and we set \(s_t^{\mathrm{den}}=0\).  Otherwise we set
\[
 s_t^{\mathrm{den}}=\min\left\{|\cY_t^{\rm tag}|,
       \left\lceil\frac{B_{\rm sam}|\cY_t^{\rm tag}|}{k}\right\rceil\right\}.
\]
Every comparison is preceded by the count test of \cref{sec:simplified}.
We write \(c_i\) for the center assigned to \(x_i\) and \(A_i\) for the
corresponding simplified distance.  \(X\)-windows without a center are
called \emph{residual}, and \(\cX'\subseteq[q]\) denotes their index set.
It depends on \(\Delta\) and \(t\), which we suppress in the notation.

\paragraph{Center-to-candidate comparisons.}
We group the \(X\)-windows assigned the same center tag
\(c=(b,a_c,e_c)\).  All windows in the group use bin \(b\), and their prefix
bands all contain \(M_{t,Y}(a_c)\).  Their union is therefore exactly the interval
\[
 \left[\min_{i:c_i=c}(M_{t,X}(p_i)-\Delta-2s),
       \max_{i:c_i=c}(M_{t,X}(p_i)+\Delta+2s)\right].
\]
Two binary searches in \(M_{t,Y}(G_t)\), followed by the endpoint enumeration
\(E_b(a)\), generate exactly the tags in
\[
                         \bigcup_{i:c_i=c}\cY_t(i).
\]
The bound on unions of candidate families in \cref{lem:family-sizes} shows that
this union contains \(O((1+\Delta/t)/\eps^2)\) tags.
The algorithm enumerates it once per distinct
center and stores the simplified distances in a table \(B_{c,\tau}\).

\begin{algorithm}[H]
\caption{Dense phase at threshold \(t\)}\label{alg:dense-phase}
\small
\begin{algorithmic}[1]
\Require The grid, tag families, and prefix data for \(t\)
\Procedure{DensePhase}{$\Delta,t$}
  \State Initialize empty assignment and center-comparison tables
  \State Sample \(s_t^{\mathrm{den}}\) tags uniformly without replacement from \(\cY_t^{\rm tag}\)
  \For{each \(X\)-window \(x_i\)}
    \For{each sampled tag \(\tau\) in a fixed order}
      \If{\(\tau\in\cY_t(i)\) and the count test passes for \(\operatorname{win}(\tau)\)}
        \State \(A\gets\SD_t(x_i,\operatorname{win}(\tau))\)
        \If{\(A\le t\)}
          \State Store \(c_i\gets\tau\) and \(A_i\gets A\)
          \State \textbf{break} the tag loop
        \EndIf
      \EndIf
    \EndFor
  \EndFor
  \State \(\cX'\gets\{i\in[q]:x_i\text{ has no center}\}\)
  \Statex
  \For{each distinct assigned center \(c=(b,a_c,e_c)\)}
    \For{each tag \(\tau\in\bigcup_{i:c_i=c}\cY_t(i)\)}
      \If{the count test passes for \(\operatorname{win}(\tau)\)}
        \State Store \(B_{c,\tau}\gets\SD_t(\operatorname{win}(c),\operatorname{win}(\tau))\)
      \EndIf
    \EndFor
  \EndFor
  \Statex
  \State \(\mathcal D_t\gets\varnothing\)
  \For{each \(i\in[q]\setminus\cX'\) and each \(\tau\in\cY_t(i)\) with \(B_{c_i,\tau}\) stored}
    \State Append \((i,a_\tau,e_\tau,A_i+B_{c_i,\tau})\) to \(\mathcal D_t\)
  \EndFor
  \State \Return \(\mathcal D_t\) and \(\cX'\)
\EndProcedure
\end{algorithmic}
\end{algorithm}

For each \(X\)-window \(x_i\) with a center, the dense phase outputs a box
for every \(\tau\in\cY_t(i)\) whose window passes the count test.  Its value
\(B_{c_i,\tau}\) was stored, since \(\cY_t(i)\) lies in the union enumerated
for \(c_i\).  The box has label
\[
                         v_t(i,\tau)=A_i+B_{c_i,\tau}.
\]

\begin{lemma}\label{lem:dense-guarantee}
The dense phase satisfies the following properties.
\begin{enumerate}[label=(\roman*),leftmargin=*]
\item Every dense-box label equals the cost of a local edit sequence.
\item Suppose \(\Delta\ge D\).  If \(x_i\) is good and has a center,
then \(\mathcal D_t\) contains the box for its canonical tag, and its label satisfies
\[
 v_t(i,\mu_t(i))
 \le\ED\bigl(x_i,\operatorname{snap}_t(y_i)\bigr)+(2+2\eps)t.
\]
\end{enumerate}
\end{lemma}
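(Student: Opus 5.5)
For (i), I will unpack the construction of a dense box. Such a box for \(x_i\) and tag \(\tau\) is output only when \(x_i\) has a center \(c_i\) and \(B_{c_i,\tau}\) was stored. The label is \(A_i=\SD_t(x_i,\operatorname{win}(c_i))\) plus \(B_{c_i,\tau}=\SD_t(\operatorname{win}(c_i),\operatorname{win}(\tau))\).

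By \cref{lem:simplified-short}, each simplified distance is exactly the cost of an edit sequence: the one induced by a maximizing restricted matching, lifted with deletions and insertions of removed characters. Concatenating the sequence from \(x_i\) to \(\operatorname{win}(c_i)\) with the sequence from \(\operatorname{win}(c_i)\) to \(\operatorname{win}(\tau)\) gives an \(x_i\)-to-\(\operatorname{win}(\tau)\) edit sequence of cost exactly \(A_i+B_{c_i,\tau}\). This is the feasibility statement already recorded in \cref{cor:dense-composition}. In particular the label is at least \(\ED(x_i,\operatorname{win}(\tau))\), as a box requires.

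For (ii), assume \(\Delta\ge D\) and that \(x_i\) is good and has a center. I will proceed in three steps.

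\emph{Presence.} Since \(D_i\le\gamma t\le t\), \cref{lem:canonical-present} gives \(\mu_t(i)\in\cY_t(i)\). That lemma also shows that the pair \((i,\mu_t(i))\) is accepted, so \(V=\operatorname{snap}_t(y_i)\) passes the count test.

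\emph{The box is emitted.} \(\cY_t(i)\) lies inside the union enumerated for \(c_i\). Since \(V\) passes the count test, \(B_{c_i,\mu_t(i)}\) was stored, and the final loop appends the box \((i,a,e,A_i+B_{c_i,\mu_t(i)})\).

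\emph{Label bound.} I will apply \cref{cor:dense-composition} with \(U=x_i\), \(C=\operatorname{win}(c_i)\), and \(V=\operatorname{snap}_t(y_i)\). Its hypotheses hold as follows:
\begin{itemize}
\item \(|U|\le d\), since every \(X\)-window has length at most \(d\);
\item \(\SD_t(U,C)=A_i\le t\), because the center was assigned only when this held;
\item \(D_V=\ED(x_i,V)\le t\), which follows from \(\SD_t(x_i,V)\le t\) in \cref{lem:canonical-present} together with \(\ED\le\SD_t\).
\end{itemize}
The corollary then gives
\[
A_i+B_{c_i,\mu_t(i)}\le D_V+(2+2\eps)t,
\]
which is the claimed bound.

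The main point to check carefully is that the stored value really is \(\SD_t(C,V)\) with no further filtering that could drop the canonical tag. The only filter applied is the count test on \(V\), and acceptance already guarantees it passes. I also need \(D_V\le t\) rather than merely \(D_i\le t\), which is exactly why I route it through \(\SD_t(x_i,V)\le t\). Beyond these checks, the argument just assembles earlier lemmas.
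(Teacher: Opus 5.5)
Your proposal is correct and follows essentially the same route as the paper. For (i) you concatenate the edit sequences realizing \(A_i\) and \(B_{c_i,\tau}\); for (ii) you use \cref{lem:canonical-present} for presence and acceptance (so the count test passes and \(\ED(x_i,V)\le t\) via \(\SD_t(x_i,V)\le t\)), and then apply \cref{cor:dense-composition} with \(U=x_i\) and the center window, exactly as the paper's proof does.
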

\begin{proof}
The edit sequences underlying \(A_i\) and \(B_{c_i,\tau}\) can be
concatenated. The first transforms \(x_i\) into \(\operatorname{win}(c_i)\), and
the second transforms \(\operatorname{win}(c_i)\) into
\(\operatorname{win}(\tau)\).  This proves (i).

For (ii), write \(V=\operatorname{snap}_t(y_i)\).
By \cref{lem:canonical-present}, \(\mu_t(i)\in\cY_t(i)\) and
\(\ED(x_i,V)\le t\).  The canonical window passes the count test, so
\(B_{c_i,\mu_t(i)}=\SD_t(\operatorname{win}(c_i),V)\) is stored and
\(\mathcal D_t\) contains the box for \(\mu_t(i)\).  Center
selection gives \(A_i=\SD_t(x_i,\operatorname{win}(c_i))\le t\).
Applying \cref{cor:dense-composition} with \(U=x_i\) and the center window
\(\operatorname{win}(c_i)\) proves (ii).
\end{proof}

The first pass inspects at most \(s_t^{\mathrm{den}}q\) pairs of an
\(X\)-window and a sampled tag.  Each inspection takes at most
\(O(d^2/\eps)\) time.  Since
\(s_t^{\mathrm{den}}=\softO(nd/(k\eps^2))\) and \(q=O(n/d)\), the first
pass takes \(\softO(n^2d^2/(k\eps^3))\) time at one threshold.  Summing
over \(L+1=\softO(1/\eps)\) thresholds gives
\(\softO(n^2d^2/(k\eps^4))\) time.

There are at most \(s_t^{\mathrm{den}}\) distinct selected centers.
Applying the bound on unions of candidate families
in \cref{lem:family-sizes} to each center bounds the total number of
center-to-candidate comparisons at threshold \(t\) by
\begin{equation}\label{eq:center-target-count}
 \softO\!\left(\frac{nd}{k\eps^4}(1+\Delta/t)\right).
\end{equation}
Since \(\Delta/t_0=64q/\eps\), the geometric sequence satisfies
\begin{equation}\label{eq:threshold-reciprocal-sum}
 \sum_{\ell=0}^{L}(1+\Delta/t_\ell)=O(q/\eps^2).
\end{equation}
Each center-to-candidate comparison takes \(O(d^2/\eps^2)\) time, so
summing \cref{eq:center-target-count} over the thresholds bounds the time of
these comparisons by \(\softO(n^2d^2/(k\eps^8))\).

We produce the dense boxes from the stored comparison results.  A window \(x_i\)
with a center receives at most \(|\cY_t(i)|=O((1+\Delta/t)/\eps^2)\) boxes by
\cref{lem:family-sizes}.  We enumerate \(\cY_t(i)\) by two binary searches in
\(M_{t,Y}(G_t)\) followed by the endpoint enumeration \(E_{b_i}(a)\), and look
up each stored value in \(O(\log n)\) time.  By
\cref{eq:threshold-reciprocal-sum}, the dense phase outputs \(O(q^2/\eps^4)\)
boxes in \(\softO(q^2/\eps^4)=\softO(n^2/(d^2\eps^4))\) time over the
threshold sequence.  Its total time is therefore
\(\softO(n^2d^2/(k\eps^8)+n^2/(d^2\eps^4))\).

Let \(\mathcal E_{\rm dense}\) be the event that every \(t\)-dense \(X\)-window is
assigned a center throughout all guesses and thresholds.  On this event
every residual \(X\)-window is \(t\)-sparse.  A \(t\)-sparse \(X\)-window may
still receive a center, so \(\cX'\) can be smaller than the set of
\(t\)-sparse indices.

\subsection{Sparse phase}

The sparse phase uses the order of the \(X\)-windows to limit the candidate
starts examined for each residual window, since examining every candidate
family in full could be expensive.  The recursion instead samples a few
residual \(X\)-windows from each block and uses their accepted starts to
construct a candidate list for the whole block.

The reason this works is that the canonical starts of \(X\)-windows in one
block are close in clipped prefix mass when the optimal subpath across the
block has sufficiently small cost.  Suppose \(\Delta\ge D\) and the parent
list contains the canonical starts of the child's good residual
\(X\)-windows.  Processing a sampled good \(X\)-window against that list
finds its canonical tag, and a suitable neighborhood of its start then
contains the canonical starts of the other good \(X\)-windows in the child.
The algorithm cannot distinguish
this start from the other accepted starts, so it takes the union of the
neighborhoods around all of them.  Sparsity bounds the number of these
neighborhoods, while grid density bounds the number of candidate starts
they generate.

Recall that \(\cX'\subseteq[q]\) indexes the residual \(X\)-windows
left by the dense phase at threshold \(t\).  We recursively bisect the ordered
window indices, always at the lower middle cut.  Each node is a contiguous
index block \(I\subseteq[q]\) representing \(|I|\) windows.  We write
\(\cX'_I=\cX'\cap I\) for the residual indices in \(I\) and
\(\mathcal L_I\subseteq G_t\) for its sorted list of candidate
starts in \(Y\).  The set \(\cX'\) stays fixed throughout the recursion, including
after boxes are found.  The routines share \(\Delta,t,\cX'\) and the output
list \(\mathcal S_t\).  \emph{ProcessXWindow} appends accepted boxes to this
list and returns their starts for constructing the child lists.

For each child block \(J\) with \(\cX'_J\ne\varnothing\) we draw a fresh
uniform sample \(Q_J\subseteq\cX'_J\) without replacement of size
\begin{equation}\label{eq:sparse-sample-size}
 \min\!\left\{|\cX'_J|,
       \left\lceil\frac{(1+\beta)B_{\rm sam}}{\beta}\right\rceil
       \right\}
\end{equation}
indices.  We run \emph{ProcessXWindow} on each sampled window \(x_i\)
using the parent list \(\mathcal L_I\) restricted to the prefix band of \(x_i\).
Let \(\mathcal A_J\) be the multiset of starts returned by these calls.

The child list consists of neighborhoods around these starts.  Write
\(x_J\) for the concatenation of the \(X\)-windows indexed by \(J\).  The
radius accounts for the clipped mass of \(x_J\) and allows
\(B_{\rm loc}|J|t\) for the intervening edit cost and snapping
displacements.  We set
\[
       \Lambda_J=M_t(x_J)+B_{\rm loc}|J|t,
\]
and define the child list by
\begin{equation}\label{eq:sparse-child-list}
 \mathcal L_J=
 \left\{z\in G_t:
 |M_{t,Y}(z)-M_{t,Y}(a)|\le\Lambda_J
 \text{ for some }a\in\mathcal A_J\right\}.
\end{equation}
\Cref{lem:sparse-localize} specifies when this list contains the canonical
starts of all good residual \(X\)-windows in \(J\).  The neighborhoods
are taken over the global grid \(G_t\) and may contain starts absent
from the parent list.  Two binary searches in \(M_{t,Y}(G_t)\) locate
each neighborhood.

\begin{algorithm}[H]
\caption{Sparse phase at threshold \(t\)}\label{alg:sparse-phase}
\small
\begin{algorithmic}[1]
\Require The grid, tag families, and prefix data for \(t\)
\Procedure{SparsePhase}{$\Delta,t,\cX'$}
  \State \(\mathcal S_t\gets\varnothing\)
  \State \Call{SearchBlock}{$[q],G_t$}
  \State \Return \(\mathcal S_t\)
\EndProcedure
\Statex
\Procedure{SearchBlock}{$I,\mathcal L_I$}
  \If{\(\cX'_I=\varnothing\)}
    \State \Return
  \EndIf
  \If{\(I=\{i\}\)}
    \State \Call{ProcessXWindow}{$i,\mathcal L_I$}
    \State \Return
  \EndIf
  \For{each child \(J\) of \(I\) with \(\cX'_J\ne\varnothing\)}
    \State Sample \(Q_J\subseteq\cX'_J\) as in \cref{eq:sparse-sample-size}
    \State Initialize an empty multiset \(\mathcal A_J\)
    \For{each sampled index \(i\in Q_J\)}
      \State Append the starts returned by \Call{ProcessXWindow}{$i,\mathcal L_I$} to \(\mathcal A_J\)
    \EndFor
    \State Form the sorted list \(\mathcal L_J\) from \cref{eq:sparse-child-list}, without duplicates
    \State \Call{SearchBlock}{$J,\mathcal L_J$}
  \EndFor
\EndProcedure
\Statex
\Procedure{ProcessXWindow}{$i,\mathcal L$}
  \State Initialize an empty multiset \(\mathcal A\) and a counter \(n_{\rm acc}\gets0\)
  \State Find the sublist \(\mathcal L'\subseteq\mathcal L\) in the prefix band of \(x_i\) by two binary searches
  \For{each \(a\in\mathcal L'\) and each \(e\in E_{b_i}(a)\)}
    \If{the count test passes for \(Y[a\mathinner{.\,.}e)\)}
      \State \(v\gets\SD_t(x_i,Y[a\mathinner{.\,.}e))\)
      \If{\(v\le t\)}
        \State Append \((i,a,e,v)\) to \(\mathcal S_t\)
        \State Append \(a\) to \(\mathcal A\)
        \State \(n_{\rm acc}\gets n_{\rm acc}+1\)
        \If{\(n_{\rm acc}=k\)}
          \State Abandon this guess
        \EndIf
      \EndIf
    \EndIf
  \EndFor
  \State \Return \(\mathcal A\)
\EndProcedure
\end{algorithmic}
\end{algorithm}

Each call to \emph{ProcessXWindow} enumerates every candidate tag at most
once.  Its counter therefore counts distinct accepted tags, even when
several have the same start.  The counter resets on each call, and reaching
\(k\) abandons the entire guess.  On \(\mathcal E_{\rm dense}\), this never
happens, because every residual \(X\)-window is \(t\)-sparse.

For the analysis, we call \(x_i\) \emph{recovered} at threshold \(t\) if a
box for its canonical tag \(\mu_t(i)\) belongs to \(\mathcal D_t\) or
\(\mathcal S_t\).

\begin{lemma}
\label{lem:sparse-localize}
Assume \(\Delta\ge D\) and \(\mathcal E_{\rm dense}\), and suppose that
\begin{enumerate}[label=(\roman*),leftmargin=*]
\item the parent list contains the canonical start of every good
residual window \(x_i\) with \(i\in J\);
\item the sample from \(J\) contains the index of a good residual \(X\)-window; and
\item
\[
                 D_J\le\frac{B_{\rm loc}}2|J|t,
\]
where \(D_J\) is the cost of the fixed optimal subpath across \(J\).
\end{enumerate}
Then \(\mathcal L_J\) contains the canonical start of every good
residual window \(x_i\) with \(i\in J\).
\end{lemma}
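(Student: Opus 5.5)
The plan has two parts. First, show that the sampled good residual window contributes its canonical start to \(\mathcal A_J\). Then show that the \(\Lambda_J\)-neighborhood of that start contains every other relevant canonical start.

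\textbf{Step 1: the sampled window returns its canonical start.} By (ii), fix a good residual index \(i_0\in Q_J\), and let \(a_0\) be the start of its canonical tag \(\mu_t(i_0)\).

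\begin{itemize}
\item Since \(\Delta\ge D\) and \(D_{i_0}\le\gamma t\le t\), \cref{lem:canonical-present} puts \(\mu_t(i_0)\) in \(\cY_t(i_0)\). So \(a_0\) lies in the prefix band of \(x_{i_0}\), and the end of the canonical window lies in \(E_{b_{i_0}}(a_0)\).
\item By (i), \(a_0\) belongs to the parent list \(\mathcal L_I\). It therefore survives the restriction to the prefix band, and \emph{ProcessXWindow} enumerates the canonical tag.
\item \Cref{lem:canonical-present} shows that this pair is accepted: it passes the count test and has simplified distance at most \(t\). Hence the box is appended and \(a_0\) is returned.
\item On \(\mathcal E_{\rm dense}\) the guess is not abandoned, since residual windows are \(t\)-sparse and cannot reach \(k\) accepted tags. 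Thus \(a_0\in\mathcal A_J\).
\end{itemize}

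\textbf{Step 2: bound the distance between canonical starts.} Fix any good residual \(i\in J\) with canonical start \(a_i\). It suffices to show
\[
|M_{t,Y}(a_i)-M_{t,Y}(a_0)|\le\Lambda_J .
\]
Introduce the optimal starts \(z^\star_{i-1}\) and \(z^\star_{i_0-1}\). The snap bounds already used in the proof of \cref{lem:canonical-present} give
\[
|M_{t,Y}(a_i)-M_{t,Y}(z^\star_{i-1})|<2s,
\]
and the same for \(i_0\). This also covers an empty snap whose anchor lies left of the original start, since both then lie in one light cell.

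It remains to bound \(|M_{t,Y}(z^\star_{i-1})-M_{t,Y}(z^\star_{i_0-1})|\). Assume without loss of generality that \(i_0<i\). The vertices of \(\pi\) at \((p_{i_0},z^\star_{i_0-1})\) and \((p_i,z^\star_{i-1})\) both lie on the optimal subpath across \(J\). Apply the subpath form of \cref{lem:clipped-potential} to the portion between them; its cost is at most \(D_J\). This gives
\[
\bigl|M_{t,Y}(z^\star_{i-1})-M_{t,Y}(z^\star_{i_0-1})\bigr|
\le M_t\bigl(X[p_{i_0}\mathinner{.\,.}p_i)\bigr)+D_J
\le M_t(x_J)+D_J .
\]
The second inequality holds because \(X[p_{i_0}\mathinner{.\,.}p_i)\) is contained in \(x_J\) and clipped mass is monotone under containment.

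Combining the pieces,
\[
|M_{t,Y}(a_i)-M_{t,Y}(a_0)|\le M_t(x_J)+D_J+4s .
\]
By (iii), \(D_J\le\frac{B_{\rm loc}}2|J|t\). Also \(4s=\eps t/16\le t\le\frac{B_{\rm loc}}2|J|t\), because \(B_{\rm loc}=C(L+1)\ge2\) for large \(C\) and \(|J|\ge1\). The total is therefore at most \(M_t(x_J)+B_{\rm loc}|J|t=\Lambda_J\). Since \(a_i\in G_t\), the definition \cref{eq:sparse-child-list} gives \(a_i\in\mathcal L_J\).

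\textbf{Main obstacle.} The delicate point is the precise meaning of "the fixed optimal subpath across \(J\)". Both boundary vertices must lie on it, with \(x\)-coordinates \(p_{i_0}\) and \(p_i\), both at most the end of \(J\). I would take it to be \(\pi\) between the chosen boundary vertices at the start of the first window of \(J\) and at the end of its last window; this is consistent with the partition of \(\pi\) into the \(\pi_i\). The sub-subpath between the two window starts is then contained in it, so its cost is at most \(D_J\) because edge costs are nonnegative. The other point needing care is the empty-snap case, where the anchor lies left of \(z^\star_{i-1}\); the \(<2s\) bound from \cref{lem:canonical-present} handles it.
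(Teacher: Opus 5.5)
Your proposal is correct and follows the paper's proof essentially step for step. You use the subpath form of \cref{lem:clipped-potential} between the two window starts, the \(<2s\) snap displacement at each end, and \(4s\le(B_{\rm loc}/2)|J|t\) to fit within \(\Lambda_J\), with \cref{lem:canonical-present} putting the sampled good window's canonical start into \(\mathcal A_J\); your extra checks of prefix-band survival and non-abandonment on \(\mathcal E_{\rm dense}\) fill in details the paper leaves implicit.
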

\begin{proof}
For a window \(x_i\), let \(a_i\in G_t\) be its canonical start.  If
\(x_i,x_{i'}\) are good residual windows with \(i<i'\) and
\(i,i'\in J\), then the optimal subpath between their
starts in \(X\) has cost at most \(D_J\).  By
\cref{lem:clipped-potential},
\[
 |M_{t,Y}(z_{i-1}^\star)-M_{t,Y}(z_{i'-1}^\star)|
 \le M_t(X[p_i\mathinner{.\,.}p_{i'}))+D_J
 \le M_t(x_J)+D_J.
\]
Each canonical start differs from its unsnapped start by less than \(2s\)
in clipped prefix mass.  The triangle inequality therefore gives
\[
 |M_{t,Y}(a_i)-M_{t,Y}(a_{i'})|
 <M_t(x_J)+D_J+4s\le\Lambda_J,
\]
where the last inequality uses assumption (iii) and
\(4s\le(B_{\rm loc}/2)|J|t\).

Choose a sampled index \(i_0\) such that \(x_{i_0}\) is good.
Its canonical start belongs
to the parent list, so \cref{lem:canonical-present} ensures that processing
\(x_{i_0}\) finds its canonical tag.  Hence \(a_{i_0}\in\mathcal A_J\).
The preceding bound places the canonical start of every good
residual \(X\)-window in the neighborhood of \(a_{i_0}\), hence in \(\mathcal L_J\) by
\cref{eq:sparse-child-list}.
\end{proof}

\subsection{Sampling guarantees}\label{sec:coverage}

We condition on the value of a positive rough estimate \(\widehat D\).  This fixes the
guesses and their threshold sequences; it does not assume that the rough
estimate succeeded.
The recovery procedures use fresh randomness independent of the
rough-estimate routine.  All samples are drawn without replacement.

Let \(\mathcal E_{\rm sparse}\) be the event that the following holds for
every sparse sampling call throughout the algorithm.  If at least a
\(\beta\)-fraction of the residual \(X\)-windows in the child are good,
the sample contains the index of a good window.  Set
\(\mathcal E=\mathcal E_{\rm dense}\cap\mathcal E_{\rm sparse}\).

For each dense \(X\)-window or sampled sparse child, the probability of
violating its coverage condition is at most \(e^{-B_{\rm sam}}\).
A union bound over these windows and calls gives the following guarantee.

\begin{lemma}
\label{lem:simultaneous-coverage}
For \(C\) sufficiently large,
\(\mathcal E\) holds with probability at least \(1-1/(2n^{10})\), also
for the adaptively reached sparse children.
\end{lemma}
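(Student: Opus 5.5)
The plan is to bound the failure probability of each individual sampling event by \(e^{-B_{\rm sam}}\), count the events, and apply a union bound. The count must also cover sparse children that are reached adaptively, so I will handle those by a deferred-decisions argument.

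\emph{Dense events.} Fix a guess \(\Delta\in\mathcal G(\widehat D)\), a level \(\ell\), and an \(X\)-window \(x_i\) that is \(t\)-dense, so that at least \(k\) tags of \(\cY_t^{\rm tag}\) are accepted for \(x_i\). Whether a tag is accepted is deterministic: it depends only on the count test and on \(\SD_t\). Hence \(x_i\) fails to get a center only if the sample misses every one of these accepted tags. The reason is that the first pass tests every sampled tag lying in \(\cY_t(i)\), and it assigns a center as soon as one is accepted.
\begin{itemize}
\item If \(s_t^{\rm den}=|\cY_t^{\rm tag}|\), the whole family is sampled, so the failure probability is zero.
\item Otherwise, sampling \(s=\lceil B_{\rm sam}N/k\rceil\) tags without replacement from \(N=|\cY_t^{\rm tag}|\) misses all \(k\) accepted tags with probability at most \((1-k/N)^{s}\le e^{-sk/N}\le e^{-B_{\rm sam}}\). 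This uses the standard fact that sampling without replacement is dominated by sampling with replacement for this event: the product \(\prod_j(1-k/(N-j))\) is at most \((1-k/N)^s\).
\end{itemize}
The dense sample at each guess and threshold is fresh. Conditioned on \(\widehat D\), the set of \(t\)-dense windows is deterministic, so no adaptivity issue arises. There are \(O(\log n)\) guesses, \(L+1=O(\log(n/\eps)/\eps)\) levels, and \(q\le n\) windows. The number of dense events is therefore polynomial in \(n/\eps\).

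\emph{Sparse events.} The main obstacle is that the children reached, and the sets \(\cX'_J\), depend on earlier randomness. However, \(\cX'\) is fixed by the dense phase before the sparse recursion starts. The index blocks form a fixed binary tree over \([q]\) with at most \(2q\) nodes, and whether a residual window is good is deterministic.

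I would therefore condition on all randomness used before a given child's sample is drawn, namely the dense sample and the earlier sparse samples. Given this history, \(\cX'_J\) and the set of good residual windows in it are determined. The sample \(Q_J\) is fresh and uniform without replacement. If at least a \(\beta\)-fraction of \(\cX'_J\) is good, the cases are as follows.
\begin{itemize}
\item If the sample is all of \(\cX'_J\), it contains a good window.
\item Otherwise, with \(m=\lceil(1+\beta)B_{\rm sam}/\beta\rceil\) draws, the miss probability is at most \((1-\beta)^m\le e^{-\beta m}\le e^{-B_{\rm sam}}\). (Only \(m\ge B_{\rm sam}/\beta\) is needed here.)
\end{itemize}
Each node of the fixed tree is sampled at most once per guess and threshold. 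Define, for every potential node, an indicator of the bad event "sampled, the \(\beta\)-fraction condition holds, and the sample misses all good windows". Its conditional probability given the history is at most \(e^{-B_{\rm sam}}\), so its unconditional probability is too. A union bound over the at most \(2q\cdot O(\log n)\cdot(L+1)\) potential nodes then handles the adaptivity. Abandoned guesses only remove events, so they cause no difficulty.

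\emph{Conclusion.} The total number of events is at most \(O(n\log n\cdot\log(n/\eps)/\eps)\le (n/\eps)^{3}\) for large \(n\). Since \(B_{\rm sam}=C\log(n/\eps)\), the failure probability is at most \((n/\eps)^{3-C}\). For \(C\) sufficiently large this is at most \(1/(2n^{10})\), using \(n/\eps\ge n\), which completes the proof of \cref{lem:simultaneous-coverage}.
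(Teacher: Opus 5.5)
Your proof is correct and follows the paper's argument essentially step for step. You bound each dense coverage condition by $e^{-B_{\rm sam}}$, using that sampling without replacement is dominated by sampling with replacement, and you bound each sparse child by $e^{-B_{\rm sam}}$ after conditioning on the preceding history; a union bound over at most $3q\,|\mathcal G(\widehat D)|(L+1)$ conditions then finishes, exactly as in the paper. The omissions are cosmetic: the case $s_t^{\rm den}=0$ is vacuous because no window is then $t$-dense, and since your final bound is uniform in $\widehat D$, it also holds without conditioning on the rough estimate, as the paper remarks.
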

\begin{proof}
When a sample contains the entire population, the corresponding coverage
condition holds deterministically.  Otherwise, sampling without replacement
has miss probability at most that of sampling with replacement.  A dense
\(X\)-window \(x_i\) has at least \(k\) accepted tags in \(\cY_t(i)\), a subset
of the \(|\cY_t^{\rm tag}|\) global tags.  The probability that the dense sample contains
none of these accepted tags is at most
\[
 \left(1-\frac{k}{|\cY_t^{\rm tag}|}\right)^{s_t^{\mathrm{den}}}
 \le e^{-B_{\rm sam}}.
\]
For a sparse child \(J\) about to be sampled, condition on the history
\(\mathcal H\) before the sample is drawn.  This fixes the child, its
residual index set \(\cX'_J\), and which of the indexed \(X\)-windows are good.  The fresh
sample remains uniform without replacement.  Thus, if at least a
\(\beta\)-fraction of these \(X\)-windows are good, the sample size in
\cref{eq:sparse-sample-size} gives
\[
 \Prb[\text{no good }X\text{-window is sampled}\mid\mathcal H]
 \le e^{-(1+\beta)B_{\rm sam}}
 \le e^{-B_{\rm sam}}.
\]
This bound holds for every possible preceding history in which the
fraction of good residual \(X\)-windows is at least \(\beta\).  For other histories,
the sparse coverage condition imposes no requirement.  Averaging over
the history therefore bounds the probability that this condition fails
at each sampling call, even though earlier samples may determine which
calls are reached.

At each guess and threshold there are at most \(q\) coverage conditions
for dense \(X\)-windows and fewer than \(2q\) sparse sampling calls.
There are \(|\mathcal G(\widehat D)|=\lceil\log_2(64n)\rceil+2\) guesses, each
with \(L+1\) thresholds.  The value of \(L\) is independent of the guess
by \cref{eq:threshold-ladder}.  Conditional on \(\widehat D\), a union bound over
the dense \(X\)-windows and the sparse calls in execution order gives
\[
 \Prb[\mathcal E^c\mid \widehat D]
 \le 3q\,|\mathcal G(\widehat D)|(L+1)e^{-B_{\rm sam}}
 \le \frac{1}{2n^{10}}.
\]
The last inequality holds for all \(n\ge2\) when \(C\) is sufficiently
large, since \(q\le n\), \(|\mathcal G(\widehat D)|=O(\log n)\), and
\(L+1=O(\eps^{-1}\log(n/\eps))\), so the factor multiplying
\(e^{-B_{\rm sam}}\le(\eps/n)^C\) is at most \((n/\eps)^{O(1)}\).
Independence between these coverage events is not needed.  Since the
bound is uniform in \(\widehat D\), it also holds without conditioning on the
rough estimate.
\end{proof}

\subsection{Sparse analysis}\label{sec:sparse-work}

\Cref{lem:sparse-localize} gives the inductive step of the sparse analysis.
At a node \(I\), the invariant is that \(\mathcal L_I\) contains the
canonical start of every good residual \(X\)-window indexed by \(I\).
It holds at the root. If it holds at \(I\), it propagates to a child \(J\)
unless condition (iii) of \cref{lem:sparse-localize} fails or no good
\(X\)-window is sampled from the child.  A child is a \emph{cost-failure node} when
condition (iii) fails and a \emph{sampling-failure node} when that condition
holds but no good \(X\)-window is sampled.
A sampling-failure node can still occur on \(\mathcal E_{\rm sparse}\) when
fewer than a \(\beta\)-fraction of its residual \(X\)-windows are good.

These failure nodes are used only in the analysis. The algorithm cannot
inspect \(D_J\) or determine whether an \(X\)-window is good. In the proof,
we truncate each branch at its first failure node.  No first failure node
is a descendant of another.  Any branch considered in the analysis with no failure
reaches a leaf, where the invariant and \cref{lem:canonical-present} yield a
box for the residual \(X\)-window's canonical tag.

The next lemma bounds the total charge of \(t\) per missed good \(X\)-window
at one threshold.  Each such window is charged to the first failure node
on its branch.

\Needspace{8\baselineskip}
\begin{lemma}
\label{lem:first-failure-charge}
Fix a tight guess and assume \(\mathcal E\).  Then, at every
threshold \(t\),
\[
 \sum_{\substack{i:\,x_i\text{ good at }t\\
                  x_i\text{ not recovered at }t}}t
 =O\!\left(\frac{D}{L+1}\right).
\]
Moreover, every \(X\)-window is recovered at the terminal threshold.
\end{lemma}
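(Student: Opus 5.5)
We fix a tight guess \(D\le\Delta<2D\) and a threshold \(t\), and work on \(\mathcal E\). A good window \(x_i\) that receives a center is recovered: \cref{lem:dense-guarantee}(ii) puts its canonical box in \(\mathcal D_t\). So only good residual windows can be missed. For those, we follow the branch of the sparse recursion from the root \([q]\) down to the leaf \(\{i\}\). The invariant of \cref{sec:sparse-work} holds at the root, because \(G_t\) contains every canonical start.

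\emph{No failure on the branch.} Suppose no node on the branch is a cost-failure or sampling-failure node. Then \cref{lem:sparse-localize} propagates the invariant to the leaf. There \emph{ProcessXWindow} scans \(\mathcal L_{\{i\}}\) restricted to the prefix band. By \cref{lem:canonical-present}, the canonical tag is accepted, so its box lands in \(\mathcal S_t\). Abandonment never occurs on \(\mathcal E_{\rm dense}\).

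\emph{Charging missed windows.} Every missed good residual window therefore has a first failure node \(J\) on its branch, and we charge \(t\) to \(J\). The first failure nodes form an antichain in the recursion tree, so their blocks are pairwise disjoint index intervals. The charge to \(J\) is at most \(|J|t\).
\begin{itemize}
\item If \(J\) is a cost-failure node, then \(D_J>(B_{\rm loc}/2)|J|t\), so the charge is at most \(2D_J/B_{\rm loc}\).
\item If \(J\) is a sampling-failure node, then on \(\mathcal E_{\rm sparse}\) fewer than a \(\beta\)-fraction of the residual windows in \(J\) are good. The number of good residual windows in \(J\) is then at most \(\frac{\beta}{1-\beta}\) times the number of bad residual windows in \(J\). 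Each bad window satisfies \(D_i>\gamma t\). Hence the charge is at most
\[
\frac{\beta}{(1-\beta)\gamma}\sum_{i\in J\text{ bad}}D_i
\le\frac{2}{C(L+1)}\sum_{i\in J}D_i .
\]
\end{itemize}
The subpaths across disjoint blocks are disjoint, and \(\sum_i D_i=D\). Summing over the antichain therefore gives a total charge of \((2/B_{\rm loc}+2/(C(L+1)))D=O(D/(L+1))\).

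\emph{The terminal threshold.} We must show every window is recovered at \(t_L\ge\Delta/\gamma\ge D/\gamma\). Every window is good there, since \(D_i\le D\le\gamma t_L\). No sampling failure can occur. Every residual child has all its windows good, so the fraction is \(1\ge\beta\), and \(\mathcal E_{\rm sparse}\) guarantees a good sample. No cost failure can occur either, since \(D_J\le D\le t_L\le(B_{\rm loc}/2)|J|t_L\) once \(C\ge2\). Every window is then recovered by the dense phase or by the failure-free branch argument above.

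The main obstacle is the sampling-failure charge. Bounding good windows in terms of bad ones needs \(\beta\le1/2\) so that \(1/(1-\beta)\le2\). It also needs the count of good windows to be measured against residual windows, which is exactly how \(\mathcal E_{\rm sparse}\) is phrased. Finally, the bad windows charged must lie inside the same disjoint block, so that no \(D_i\) is charged twice.
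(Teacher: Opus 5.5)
Your proposal is correct and follows the paper's proof essentially step for step. You charge each missed good residual window to the first failure node on its branch, bound cost-failure charges by \(2D_J/B_{\rm loc}\) and sampling-failure charges via the good-to-bad ratio \(\beta/(1-\beta)\) over residual windows, sum over the disjoint blocks, and rule out both failure types at \(t_L\). The differences are cosmetic: you use \(\beta\le1/2\) where the paper uses \(\beta\le1/8\), and at the terminal threshold you invoke \(\mathcal E_{\rm sparse}\), whereas the paper simply notes that any nonempty sample from an all-good residual set contains a good index.
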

\begin{proof}
The optimal subpath across a recursion node \(J\) is the concatenation
of the subpaths \(\pi_i\) for \(i\in J\).  Hence
\[
                              D_J=\sum_{i\in J}D_i.
\]
Because no first failure node is a descendant of another, their index
blocks are disjoint, so their optimal subpaths share at most endpoints.  Consequently, for every
collection \(\mathcal F\) of first failure nodes,
\[
                              \sum_{J\in\mathcal F}D_J\le D.
\]

Every unrecovered good \(X\)-window is residual by \cref{lem:dense-guarantee}.
Assign its charge of \(t\) to the first failure node on its root-to-leaf path.

At a first failure node \(J\) of the cost type, at most \(|J|\) windows are charged, and
condition (iii) of \cref{lem:sparse-localize} fails.  Thus
\[
                         |J|t<\frac{2D_J}{B_{\rm loc}}.
\]
Since these subpath costs sum to at most \(D\), these nodes
contribute at most \(2D/[C(L+1)]\).

At a first failure node \(J\) of the sampling type, fewer than a \(\beta\)-fraction of the
residual \(X\)-windows are good, by \(\mathcal E_{\rm sparse}\).
These counts concern \(\cX'_J\), which may be smaller than the full block
\(J\).  Thus the number of good residual windows is less than
\(\beta/(1-\beta)\) times the number of bad ones.  Since each bad window
\(x_i\) has \(D_i>\gamma t\), the charge to \(J\) is at most
\[
 \frac{\beta}{\gamma(1-\beta)}
 \sum_{\substack{i\in\cX'_J\\D_i>\gamma t}}D_i.
\]
The blocks are disjoint, so these sums of \(D_i\) total at most \(D\).
The choice of \(\beta\) and the inequality \(\beta\le1/8\), which holds
because \(C\ge8\), bound the total charge to these nodes by
\(2D/[C(L+1)]\).
Combining the two bounds proves the missed-window bound.

At the terminal threshold \(t_L\),
\(D_i\le D\le\Delta\le\gamma t_L\) for every window \(x_i\).  Moreover, for every
nonempty child \(J\),
\[
 D_J\le D\le\Delta\le\gamma t_L
      <\frac{B_{\rm loc}}2|J|t_L.
\]
The displayed bound rules out cost-failure nodes.  Every residual \(X\)-window is
good at \(t_L\), so every nonempty child sample contains the index of
such a window.
Hence sampling-failure nodes are also impossible.  The invariant therefore
holds at every residual leaf and yields a box for each residual \(X\)-window's
canonical tag.
For every \(X\)-window with a center, \cref{lem:dense-guarantee} yields its
canonical box as well.
\end{proof}

We next bound the candidate lists, the time needed to examine them, and the
number of sparse boxes.  These bounds hold for every execution up to
completion or abandonment of the guess.  The stopping rule bounds each
\emph{ProcessXWindow} call by \(k\) accepted tags; its running time also depends
on the candidate starts examined.

\paragraph{Candidate lists.}
By the choices of \(\beta\) and \(B_{\rm sam}\), each child samples
\(O((L+1)\log(n/\eps))=\softO(1/\eps)\) indices.  Each completed sampled
call returns fewer than \(k\) starts, counted with multiplicity, so
\(|\mathcal A_J|=\softO(k/\eps)\).  One returned start contributes a
neighborhood of clipped prefix width \(2\Lambda_J\).  By
\cref{lem:grid-density}, this neighborhood contains
\[
 O\!\left(1+\frac{\Lambda_J}{\eps t}\right)
 =O\!\left(1+\frac{M_t(x_J)}{\eps t}+\frac{B_{\rm loc}|J|}{\eps}\right)
\]
grid cuts.  Consequently, the number of entries generated to form
\(\mathcal L_J\), before sorting and deduplication, is at most
\[
 \softO\!\left(
 \frac{k}{\eps^2}\left(1+\frac{M_t(x_J)}{t}+B_{\rm loc}|J|\right)\right).
\]
This estimate counts a cut again whenever it belongs to another
neighborhood.  Deduplication can only reduce the resulting list size.

At any fixed depth, the child blocks are disjoint, so the strings \(x_J\)
are disjoint substrings of \(X\) and
\[
 \sum_J M_t(x_J)\le M_t(X)\le nt,\qquad
 \sum_J |J|\le q,\qquad
 |\{J\}|\le q.
\]
Summing the preceding entry bound over these blocks gives
\[
 \softO\!\left(\frac{k}{\eps^2}(q+n+B_{\rm loc}q)\right)
 =\softO(nk/\eps^3),
\]
using \(q\le n\) and \(B_{\rm loc}=\softO(1/\eps)\).  Thus both the
deduplicated lists and the entries generated while constructing them
have total size \(\softO(nk/\eps^3)\) at each nonroot depth.

\paragraph{Examinations.}
In \emph{ProcessXWindow}, we count each candidate start considered for a
residual \(X\)-window as one examination.  The binary searches
first restrict the supplied list to its prefix band, so a call using
\(\mathcal L_I\) examines at most \(|\mathcal L_I|\) starts.  This
counts examinations whose endpoint candidates are all rejected as well
as those that produce boxes.

For an internal node \(I\), its list is used to process the sampled \(X\)-windows
from its at most two children.  Each child samples
\(\softO(1/\eps)\) indices, so these calls examine each entry of
\(\mathcal L_I\) at most \(\softO(1/\eps)\) times.  At a fixed
nonroot depth, their total number of examinations is therefore at most
\[
 \softO(1/\eps)\sum_{I\text{ at this depth}}|\mathcal L_I|
 =\softO(nk/\eps^4).
\]
A residual leaf makes one additional call using its own list.  At each
depth the leaf lists are included in the same sum, so these final calls
also satisfy the bound.

The root list \(G_t\) is used in \(\softO(1/\eps)\) calls to
\emph{ProcessXWindow}.  In each call, the prefix-band
restriction leaves \(O(1+\Delta/(\eps t))\) starts by grid density.
Since \(t\ge t_0\) and \(\Delta/t_0=64q/\eps\), the root contributes
\(\softO(q/\eps^3)\) examinations per threshold.  There are
\(O(\log(q+1))\) depths and \(L+1=\softO(1/\eps)\) thresholds.
Summing over them gives \(\softO(nk/\eps^5)\) examinations
throughout the threshold sequence.

\paragraph{Sparse boxes.}
At each depth, an index \(i\) belongs to one block and can be sampled at most
once because sampling is without replacement.  The window \(x_i\) is
also processed once at its leaf if the recursion reaches that leaf.
Thus each \(X\)-window participates
in at most \(\lceil\log_2 q\rceil+1\) calls per threshold.  With at most
\(k\) boxes stored per call, the total size of the lists \(\mathcal S_t\)
over all thresholds is at most
\[
 qk\bigl(\lceil\log_2 q\rceil+1\bigr)(L+1)
 =\softO(qk/\eps).
\]
This includes repeated boxes found in different calls.

\paragraph{Running time.}
For each examined start, the endpoint family has \(O(1/\eps)\) members.
The count test takes constant time per endpoint, and each comparison
that passes it takes \(O(d^2/\eps)\) time.  Thus the examinations take
\(\softO(nkd^2/\eps^7)\) time in total.  Sampling and the binary searches require
only logarithmic overhead per call or returned start.  There are
\(\softO(q/\eps)\) calls across the threshold sequence and at most
\(\softO(qk/\eps)\) returned starts.  Generating, sorting, and
deduplicating all candidate lists takes
\(\softO(nk/\eps^4)\) additional time by the entry bounds above.
These additional costs are dominated by the running-time bound for processing
candidate starts.  The sparse phase therefore takes \(\softO(nkd^2/\eps^7)\) time
over the threshold sequence.

\subsection{Threshold sequence guarantee}

Grid and tag preparation take \(O(nd/\eps^2)\) time per threshold by
\cref{lem:family-sizes} and \(\softO(nd/\eps^3)\) time over the sequence.
Together with the dense and sparse phases, this gives the following guarantee.

\begin{samepage}
\begin{theorem}
\label{thm:threshold-ladder}
Fix a guess \(\Delta>0\).  Every box produced during the threshold
sequence has a label equal to the cost of a local edit sequence.  Processing
these thresholds takes
\[
                 \softO\!\left(
                 \frac{nd}{\eps^3}+\frac{n^2d^2}{k\eps^8}
                 +\frac{nkd^2}{\eps^7}+\frac{n^2}{d^2\eps^4}\right)
\]
time before either completing or abandoning the guess.  A completed
sequence produces \(P_{\rm box}=\softO(q^2/\eps^4+qk/\eps)\) boxes.  If
\(D\le\Delta<2D\) and \(\mathcal E\) holds, it completes and satisfies the
following properties.
\begin{enumerate}[label=(\roman*),leftmargin=*]
\item at every threshold \(t\), every good \(x_i\) with a center has a
dense box for its canonical tag, with the label bound in
\cref{lem:dense-guarantee};
\item every sparse box for a canonical tag at threshold \(t_\ell\) satisfies
\[
 \SD_{t_\ell}\bigl(x_i,\operatorname{snap}_{t_\ell}(y_i)\bigr)
 \le\ED\bigl(x_i,\operatorname{snap}_{t_\ell}(y_i)\bigr)+2\eps t_\ell;
\]
\item the missed-window charge satisfies the bound in
\cref{lem:first-failure-charge}; and
\item every \(X\)-window is recovered at the terminal threshold.
\end{enumerate}
\end{theorem}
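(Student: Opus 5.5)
This theorem mostly collects guarantees already established in \cref{sec:threshold}, so the plan is to assemble them in order: label soundness, running time, box count, and then the four properties under a tight guess and \(\mathcal E\).

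\emph{Label soundness.} Dense labels are covered by \cref{lem:dense-guarantee}(i). A sparse box \((i,a,e,v)\) has \(v=\SD_t(x_i,Y[a\mathinner{.\,.}e))\), and by \cref{lem:simplified-short} this value is realized by the edit sequence induced by a maximizing restricted matching. So every label in \(\mathcal B\) is the cost of a local edit sequence, for every run and whether or not the guess is abandoned.

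\emph{Running time.} I sum the per-threshold accounting already carried out.
\begin{itemize}
\item Grid and tag preparation cost \(\softO(nd/\eps^3)\) over the sequence, by \cref{lem:family-sizes}, \(|G_t|=O(n)\), and \(L+1=\softO(1/\eps)\).
\item The dense first pass costs \(\softO(n^2d^2/(k\eps^4))\).
\item Center-to-candidate comparisons cost \(\softO(n^2d^2/(k\eps^8))\), using \cref{eq:center-target-count} and \cref{eq:threshold-reciprocal-sum}. This term dominates the first pass.
\item Producing the dense boxes costs \(\softO(n^2/(d^2\eps^4))\).
\item The sparse phase costs \(\softO(nkd^2/\eps^7)\).
\end{itemize}

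The sparse bound relies on the abandonment rule. Every completed \emph{ProcessXWindow} call returns fewer than \(k\) starts, so \(|\mathcal A_J|=\softO(k/\eps)\) holds on every execution, not just on \(\mathcal E\). A call that reaches its \(k\)-th accepted tag stops the whole guess immediately, so its own cost fits within the same per-call budget. Hence the stated time bound holds before either completion or abandonment.

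\emph{Box count.} The dense output has \(O(q^2/\eps^4)\) boxes by \cref{lem:family-sizes} and \cref{eq:threshold-reciprocal-sum}. The sparse output has \(\softO(qk/\eps)\) boxes by the per-call cap of \(k\) and the fact that each window participates in \(O(\log q)\) calls per threshold. Together these give \(P_{\rm box}\).

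\emph{Properties under a tight guess.} Now assume \(D\le\Delta<2D\) and \(\mathcal E\).

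Completion: on \(\mathcal E_{\rm dense}\), every residual window is \(t\)-sparse at every threshold. \emph{ProcessXWindow} counts distinct accepted tags from a subset of \(\cY_t(i)\), so it counts fewer than \(k\) of them and never abandons the guess.

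\begin{itemize}
\item Property (i) is \cref{lem:dense-guarantee}(ii), which applies since \(\Delta\ge D\).
\item Property (ii) is the second inequality of \cref{lem:simplified-short} with \(U=x_i\), \(|x_i|\le d\), and \(V=\operatorname{snap}_{t_\ell}(y_i)\), since a sparse label equals that simplified distance.
\item Properties (iii) and (iv) are exactly \cref{lem:first-failure-charge}, whose hypotheses are a tight guess and \(\mathcal E\).
\end{itemize}

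\emph{Main point to check.} The only step needing care is that the running-time bound holds on every execution, including abandoned ones. The candidate-list size bounds use only the per-call cap and the disjointness of blocks at each depth. They do not use \(\mathcal E\) or the correctness of the guess, so they apply to every run.
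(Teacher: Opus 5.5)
Your proposal is correct and follows essentially the same route as the paper. The paper proves this theorem by adding the per-threshold grid and tag preparation cost to the dense- and sparse-phase accounting of \cref{sec:threshold}, where the sparse bound uses only the per-call cap of \(k\) accepted tags and therefore also covers abandoned runs, and by reading off properties (i)--(iv) from \cref{lem:dense-guarantee,lem:simplified-short,lem:first-failure-charge}; your completion argument matches the paper's justification of the stopping rule, since \emph{ProcessXWindow} is invoked only on residual windows, these are \(t\)-sparse on \(\mathcal E_{\rm dense}\), and the call enumerates a subset of \(\cY_t(i)\).
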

\end{samepage}

The samples and recursion lists are local to one threshold.  The lists
\(\mathcal D_t\) and \(\mathcal S_t\) from all thresholds are collected in
\(\mathcal B\) and passed to the chunked DP in \cref{sec:aggregation}.
\section{Proof of the main theorem}\label{sec:proof}

\subsection{Approximation guarantee}\label{sec:analysis}

Fix a tight guess and assume the event \(\mathcal E\) of
\cref{sec:coverage}.  For each \(X\)-window \(x_i\), let \(\sigma_i\) be
the first level at which it is good, and let \(j_i\) be the first level at
or after \(\sigma_i\) at which it is recovered.  Both levels exist because
every \(X\)-window is good and recovered at \(t_L\)
(\cref{thm:threshold-ladder}).

For every \(\sigma_i\le\ell<j_i\), window \(x_i\) is good but unrecovered at
threshold \(t_\ell\).  By \cref{lem:first-failure-charge}, these windows
have total charge \(O(D/(L+1))\) at each level.  Summing over levels gives
\[
 \sum_i\sum_{\ell=\sigma_i}^{j_i-1}t_\ell
 =\sum_{\ell=0}^{L}t_\ell
   \bigl|\{i:\sigma_i\le\ell<j_i\}\bigr|=O(D).
\]

If \(\sigma_i>0\), minimality gives
\(t_{\sigma_i}<(1+\eps)D_i/\gamma\).  The windows with \(\sigma_i=0\)
contribute at most \(qt_0=O(\eps D)\).  Thus
\[
 \sum_it_{\sigma_i}\le\frac{1+\eps}{\gamma}D+qt_0
 =(1+O(\eps))D.
\]
Since consecutive thresholds differ by \(\eps t_\ell\), the total
increase from late recoveries is
\begin{equation}\label{eq:delay-identity}
 \sum_i(t_{j_i}-t_{\sigma_i})
 =\eps\sum_i\sum_{\ell=\sigma_i}^{j_i-1}t_\ell=O(\eps D).
\end{equation}
Consequently,
\begin{equation}\label{eq:first-scale-sum}
 \sum_it_{j_i}\le(1+O(\eps))D.
\end{equation}

\begin{lemma}
\label{lem:tight-comparison-chain}
Fix a tight guess and assume \(\mathcal E\).
Then the box collection contains a
monotone chain of cost at most \((3+O(\eps))D\), where the hidden
constant is absolute.
\end{lemma}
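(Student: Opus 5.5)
We build an explicit chain from the recovered canonical boxes at the levels \(j_i\) and bound its cost window by window. For each \(i\), let \(t=t_{j_i}\) and \(V_i=\operatorname{snap}_t(y_i)\). Call \(i\) \emph{nonempty} if \(V_i\) is nonempty. For each nonempty \(i\), recovery at level \(j_i\) (which is at or after \(\sigma_i\), so \(x_i\) is good there) supplies a box \((i,a,e,\lambda)\) in \(\mathcal B\) with \(Y[a\mathinner{.\,.}e)=V_i\). Take the chain \(\mathcal C\) of these boxes in increasing order of \(i\). It is monotone because each nonempty snap is a subwindow of \(y_i\), the \(y_i\) occur in order, and this holds whichever threshold each box comes from. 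If no index is nonempty, compare with the empty chain instead; the same accounting applies.

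\textbf{Label bound for nonempty \(i\).} If the box is dense, \cref{lem:dense-guarantee}(ii) gives \(\lambda\le\ED(x_i,V_i)+(2+2\eps)t\). If it is sparse, its label is \(\SD_t(x_i,V_i)\le t\), and \cref{thm:threshold-ladder}(ii) also gives \(\lambda\le\ED(x_i,V_i)+2\eps t\). In both cases
\[
 \lambda\le\ED(x_i,V_i)+(2+2\eps)t_{j_i}.
\]
Next, \(\ED(x_i,V_i)\le D_i+\delta_{i,t}\), because we can align \(x_i\) to \(y_i\) and then delete the snapped-off ends. Hence \(\lambda\le D_i+\delta_{i,t_{j_i}}+(2+2\eps)t_{j_i}\).

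\textbf{Gap costs in the chain.} Expand the chain cost \cref{eq:chain-cost} as the sum of the labels, plus \(M(x_i)\) for every skipped \(X\)-window, plus the gap mass of the uncovered part of \(Y\). The skipped \(X\)-windows are exactly the empty indices. The uncovered part of \(Y\) is the union of the full \(y_i\) for empty \(i\) and the snapped-off end portions for nonempty \(i\), because the \(y_i\) concatenate to \(Y\).

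For an empty \(i\), the charged amount is \(M(x_i)+M(y_i)\). Using \(M(x_i)\le\ED(x_i,y_i)+M(y_i)\) (by the triangle inequality for \(\ED\) against the empty string), this is at most \(D_i+2M(y_i)=D_i+2\delta_{i,t}<D_i+8s\). For a nonempty \(i\), the inserted end portions cost exactly \(\delta_{i,t}\).

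Summing these contributions,
\[
 \operatorname{cost}(\mathcal C)\le\sum_i\bigl(D_i+(2+2\eps)t_{j_i}+2\delta_{i,t_{j_i}}\bigr)\le D+(2+O(\eps))\sum_it_{j_i},
\]
using \(\delta_{i,t}<4s=\eps t/16\) from \cref{eq:canonical-removal}. Applying \cref{eq:first-scale-sum} gives \((3+O(\eps))D\).

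\textbf{Main obstacle.} The delicate step is the bookkeeping that makes the uncovered part of \(Y\) exactly the union of the full \(y_i\) for empty \(i\) and the trimmed ends for nonempty \(i\). This is where the inward-snap containment is essential, and where the anchors of empty snaps (possibly lying outside \(y_i\), and unordered across thresholds) are avoided by omitting those boxes entirely. The rest is summation using the stated lemmas.
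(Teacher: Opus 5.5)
Your proof is correct and follows the paper's argument essentially step for step. It uses the same chain of recovered canonical boxes for the nonempty snaps, ordered by containment in \(y_i\), and the same exact split of the chain cost into labels plus \(\delta_i\) for nonempty windows and \(M(x_i)+M(y_i)\) for empty ones, followed by the same summation via \cref{eq:first-scale-sum}; your explicit fallback to the empty chain when every canonical window is empty is a small case the paper leaves implicit.
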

\begin{proof}
For every \(x_i\) whose canonical window at level \(j_i\) is nonempty,
include a recovered box in \(\mathcal C\), in increasing order of \(i\).
The optimal windows \(y_i\) are ordered, and each included \(Y\)-window
lies inside \(y_i\).  Thus these boxes form a monotone chain even when the
levels \(j_i\) differ.  We omit empty canonical windows, whose anchors
need not remain ordered across levels.

Let \(\mathcal N\) be the set of indices \(i\) whose canonical window is nonempty.  For
\(i\in \mathcal N\), write \(\lambda_i\) for the recovered box's label and
\(\delta_i=\delta_{i,t_{j_i}}\) for the unclipped gap mass removed from
\(y_i\).  Since the optimal \(Y\)-windows partition \(Y\), the chain's
insertion terms charge \(\delta_i\) for \(i\in \mathcal N\) and \(M(y_i)\)
otherwise.  These portions are disjoint, so the exact chain cost is
\[
 \operatorname{cost}(\mathcal C)
 =\sum_{i\in \mathcal N}(\lambda_i+\delta_i)
  +\sum_{i\notin \mathcal N}\bigl(M(x_i)+M(y_i)\bigr).
\]

For \(i\in \mathcal N\), \cref{eq:canonical-removal} gives
\(\delta_i=O(\eps t_{j_i})\).  By \cref{thm:threshold-ladder}, the additive
error is at most \(2\eps t_{j_i}\) for sparse boxes and
\((2+2\eps)t_{j_i}\) for dense boxes.  Thus both satisfy
\begin{align*}
 \lambda_i
 &\le\ED\bigl(x_i,\operatorname{snap}_{t_{j_i}}(y_i)\bigr)
       +(2+2\eps)t_{j_i}\\
 &\le D_i+\delta_i+(2+2\eps)t_{j_i}.
\end{align*}
One copy of \(\delta_i\) comes from replacing \(y_i\) by its canonical
window in the local edit sequence, and the other is the chain's insertion cost.
Hence \(x_i\) contributes at most
\[
 D_i+2\delta_i+(2+2\eps)t_{j_i}
 \le D_i+(2+O(\eps))t_{j_i}.
\]

For \(i\notin \mathcal N\), the canonical window is empty, so
\cref{eq:canonical-removal} gives \(M(y_i)=O(\eps t_{j_i})\).  The triangle
inequality through \(y_i\) gives \(M(x_i)\le D_i+M(y_i)\).  Therefore,
\[
 M(x_i)+M(y_i)
 \le D_i+2M(y_i)=D_i+O(\eps t_{j_i}).
\]
Summing these contributions and using \(\sum_iD_i=D\) together with
\cref{eq:first-scale-sum} gives
\begin{align*}
 \operatorname{cost}(\mathcal C)
 &\le D+(2+O(\eps))\sum_it_{j_i}\\
 &\le(3+O(\eps))D.\qedhere
\end{align*}
\end{proof}

\subsection{Completing the proof}\label{sec:total}

\begin{proof}[Proof of \cref{thm:main}]
By \cref{thm:threshold-ladder,lem:chain-sound}, the minimum returned over
completed guesses and the all-gap candidate is at least \(D\).  The case
\(D=0\) is handled by \emph{RoughUpperBound}.

Suppose \(D>0\).  By \cref{thm:rough-guide,lem:simultaneous-coverage}
and a union bound, the rough estimate succeeds and \(\mathcal E\) holds
with probability at least \(1-n^{-10}\).  On this event, a tight guess
exists and completes by \cref{thm:threshold-ladder}.  For that guess,
\cref{lem:tight-comparison-chain,lem:chain-dp} bound the returned value
by \((3+O(\eps))D\).

For every guess and every choice of randomness,
\cref{thm:threshold-ladder} bounds the time spent before the threshold
sequence completes or the guess is abandoned.  For a completed sequence, the chunked DP evaluates the
\(P_{\rm box}\) boxes of \cref{thm:threshold-ladder} in
\(O(n+P_{\rm box}\log n)\) time by \cref{lem:chain-dp}.  Including \emph{RoughUpperBound} and all \(O(\log n)\)
guesses gives the worst-case running time
\begin{equation}\label{eq:master-resource}
 \softO\!\left(
 n+\frac{nd}{\eps^3}+\frac{n^2d^2}{k\eps^8}
 +\frac{nkd^2}{\eps^7}+\frac{n^2}{d^2\eps^4}
 \right).
\end{equation}
The cost of creating and processing the dense boxes contributes the last
term.  Processing the sparse boxes is dominated by the sparse-phase bound.
Choose
\begin{equation}\label{eq:final-parameters}
 d=\lceil n^{1/8}\rceil,\qquad k=\lceil n^{1/2}\rceil.
\end{equation}
The running time is then \(\softO(n^{7/4}/\eps^8)\), independently of the
ratio between positive edit costs.

It remains to rescale the accuracy.  The bounds above give ratio
\(3+O(\eps)\) with an absolute hidden constant.  Given the accuracy
\(0<\eps\le1\) requested in \cref{thm:main}, we run the algorithm with accuracy
parameter \(\eps'=\eps/C\), which lies in \((0,1/C]\).  Since
\(C\) is at least the hidden constant, the ratio is at most
\(3+C\eps'\le3+\eps\), and the running time is
\(\softO(n^{7/4}/(\eps')^8)=\softO(n^{7/4}/\eps^8)\).
\end{proof}

\section*{AI Disclosure.}
We used OpenAI's GPT-6 Astra to assist in developing and checking the proof
arguments, probability bounds, and parameter choices, and in drafting and
revising the manuscript.  We used Anthropic's Claude Opus 5 to assist in
revising the exposition and notation and in checking the proofs.  The authors
take full responsibility for the final content.

\bibliographystyle{alpha}
\bibliography{ref.bib}

\end{document}